\theoremstyle{remark}
\newtheorem{theorem}{ {Theorem}}
\newtheorem{definition}{{Definition}}
\newtheorem{lemma}{ {Lemma}}
\newtheorem{remark}{ {Remark}}
\definecolor{RedBlue}{rgb}{0.8,0,0.5}
\definecolor{RedBlueGreen}{rgb}{0.8,0.6,0.5}
\definecolor{YellowOrange}{rgb}{0.4,0.4,0}
\definecolor{OliveGreen}{rgb}{0,0.6,0}
\begin{document}
	
\title{Rate-Efficiency and Straggler-Robustness through Partition in Distributed Two-Sided Secure Matrix Computation}

\author{\IEEEauthorblockN{\small{Jaber~Kakar$^{*}$,~\IEEEmembership{\small Student Member,~IEEE,}
        Seyedhamed~Ebadifar$^{*}$,
        \\and~Aydin~Sezgin$^{*}$,~\IEEEmembership{\small Senior~Member,~IEEE}\\}
	\IEEEauthorblockA{$^{*}$Institute of Digital Communication Systems,
		Ruhr-Universit{\"a}t Bochum, Germany \\
		Email: \{jaber.kakar, seyedhamed.ebadifar, aydin.sezgin\}@rub.de}
	}        
}  

\markboth{Draft}%
{Shell \MakeLowercase{\textit{et al.}}: Bare Demo of IEEEtran.cls for IEEE Communications Society Journals}

\makeatletter
\newcommand*{\rom}[1]{\expandafter\@slowromancap\romannumeral #1@}
\makeatother

\maketitle

\newcommand{\alert}[1]{\textcolor{black}{#1}}
\newcommand{\alertv}[1]{\textcolor{black}{#1}}

\begin{abstract}
	Computationally efficient matrix multiplication is a fundamental requirement in various fields, including and particularly in data analytics. To do so, the computation task of a large-scale matrix multiplication is typically outsourced to multiple servers. However, due to data misusage at the servers, security is typically of concern. In this paper, we study the two-sided secure matrix multiplication problem, where a user is interested in the matrix product $\bm A\bm B$ of two finite field private matrices $\bm A$ and $\bm B$ from an information-theoretic perspective. In this problem, the user exploits the computational resources of $N$ servers to compute the matrix product, but simultaneously tries to conceal the private matrices from the servers. Our goal is twofold: (i) to \emph{maximize} the communication rate, and, (ii) to \emph{minimize} the effective number of server observations needed to determine $\bm A\bm B$, while preserving security, where we allow for up to $\ell\leq N$ servers to collude. To this end, we propose a general aligned secret sharing scheme for which we optimize the matrix partition of matrices $\bm A$ and $\bm B$ in order to either optimize objective (i) or (ii) as a function of the system parameters (e.g., $N$ and $\ell$). A proposed \emph{inductive} approach gives us \emph{analytical, close-to-optimal} solutions for both (i) and (ii). With respect to (i), our scheme significantly outperforms the existing scheme of \emph{Chang and Tandon} in terms of (a) communication rate, (b) maximum tolerable number of colluding servers and (c) computational complexity.    
\end{abstract}

\begin{IEEEkeywords}
Matrix Multiplication, Security, Interference Alignment, Secret Sharing, Straggler Mitigation. 
\end{IEEEkeywords}

\IEEEpeerreviewmaketitle

\section{Introduction}
\label{sec:intro}
In machine learning and scientific computation, matrix multiplication plays an important role. \alert{However,} in many cases high memory requirements and computational effort is \alert{required}. Distributed approaches have been used to \alert{circumvent} computational and memory related barriers of matrix multiplication \cite{BOWLER2001255, x2, x3, x4}. Although distributed matrix multiplication can resolve computational and memory related difficulties, it causes new security problems. In the cryptography literature, different schemes have been proposed that balance security and efficiency of distributed matrix multiplication. \emph{Bultel et al.} \cite{laskdjfdfsadlkj} \alert{suggest partially homomorphic encryption approaches in the framework of MapReduce matrix multiplication.} \alert{In other related works, cryptographic techniques are applied to the problem of distributed matrix multiplication in cloud computing \cite{7841783,6755190}.}

As opposed to cryptographic techniques, \emph{information-theoretic} techniques have been hardly applied to the problem of secure matrix multiplication. In \cite{8437651}, \emph{\alert{Nodehi and Maddah-Ali}} apply \alert{information theory to the framework of \emph{limited-sharing multi-party computation}}. In limited-sharing multi-party computation, a set of sources offload the computation task, i.e., computing a polynomial function of input matrices available at the sources, to a set of \alert{servers}. The result of the computation has to be delivered to a master node. \alert{The authors} propose an efficient \alert{\emph{polynomial sharing}} scheme that minimizes the number of required servers (which is known as \emph{recovery threshold}) while preserving the privacy of \alert{colluding} servers and the master. Similar schemes have been applied to the context of non-uniform computation delays at the servers \cite{Yu2018StragglerMI}. 

\alert{In \cite{Chang2018OnTC}, \emph{Chang and Tandon} study the communication rate of a secure matrix multiplication problem consisting of a single user and $N$ \alert{curious} servers which are responsible for the computation of two matrices available at the user. The communication rate is seeked to be maximized when $\ell$ servers collude. The authors divide the security problem in two models.} 
\begin{enumerate}[label=($\roman*$)]
	\item \textbf{One-sided:} \alert{Only one of the two matrices is private. The other matrix is publicly available at all servers.}   
	\item \textbf{Two-sided:} Both matrices stored at the user are private and not available at the servers. 
\end{enumerate}
\alert{While for the \emph{one-sided model}, they characterize the capacity with respect to the communication rate, the capacity for the \emph{two-sided model} remains unknown. By comparing with the converse, their proposed scheme for the second model seems to be loose in terms of communication rate and the maximum number of tolerable colluding servers supporting a non-zero rate.}

\alert{In this paper, we propose a novel \emph{aligned secret sharing} scheme under arbitrary matrix partition for the two-sided model to optimize the two conflicting metrics -- (i) rate and (ii) recovery threshold. To this end, we formulate two optimization problems, (i) one which \emph{maximizes} the rate and (ii) the other which \emph{minimizes} the number of effective server needed when computing $\bm A\bm B$ subject to a minimum rate constraint. Both optimization problems find the optimal matrix partition of the matrices $\bm A$ and $\bm B$. Through an inductive approach, we find \emph{analytical, close-to-optimal} solutions of the optimization problems. These solutions identify the optimal matrix partition as a function of $N$ and $\ell$ and a minimum rate requirement $R_{\text{th}}$. With respect to objective (i), our scheme significantly improves upon the scheme of Chang and Tandon in terms of rate, computational complexity on the servers and the maximum number of tolerable colluding servers. While the maximum number of tolerable colluding servers of the scheme proposed by Chang and Tandon is equal to $\lfloor \sqrt{N}-1\rfloor$, our scheme attains a non-zero rate for up to $\lfloor \nicefrac{(N-1)}{2} \rfloor$ colluding servers. Despite of the higher communication rate in comparison to \cite{Chang2018OnTC}, our scheme attains a lower computational complexity at the servers.}

\textbf{Notations:} Throughout this paper, boldface lower-case and capital letters represent vectors and matrices, respectively. Further, for any two integers $a,b$ with $a\leq b$, we define $[a:b]\triangleq\{a,a+1,\ldots,b\}$ and we denote $[1:b]$ simply as $[b]$. $\mathbb{Z}$ refers to the set of all integers, while $\mathbb{Z}^{+}$ to the subset of positive integers. 

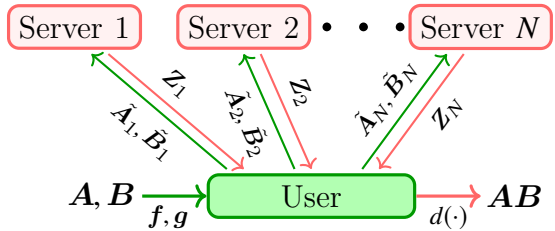
\begin{figure}[t]
	\centering
	\begin{tikzpicture}[roundnode/.style={circle, draw=green!60, fill=green!5, very thick, minimum size=7mm}, squarednode/.style={rectangle, draw=red!60, fill=red!5, very thick, minimum size=5mm, rounded corners},scale=0.9]
	
	\node[squarednode,fill=green!30, draw=OliveGreen] (u) at (3, 0) {$\qquad\text{User}\qquad$};
	\node[squarednode] (s1) at (0-0.5, 2.5) {Server $1$};
	\node[squarednode] (s2) at (2, 2.5) {Server $2$};
	\node[scale=2.5] (dots) at (3.75,2.5) {$\ldots$};
	\node[squarednode] (s2) at (5.5, 2.5) {Server $N$};
	
	\draw[->, thick, OliveGreen] (1.75,0.4) -- (-0.25,2.1) node[pos=0.5,sloped,below, black] {\footnotesize $\tilde{\bm A}_{1},\tilde{\bm B}_{1}$};
	\draw[<-, thick, red!60] (2,0.4) -- (0,2.1) node[pos=0.6,sloped,above, black] {\footnotesize $\bm Z_{1}$};
	
	\draw[->, thick, OliveGreen] (2.75,0.4) -- (2,2.1) node[pos=0.5,sloped,below, black] {\footnotesize $\tilde{\bm A}_{2},\tilde{\bm B}_{2}$};
	\draw[<-, thick, red!60] (3,0.4) -- (2.25,2.1) node[pos=0.6,sloped,above, black] {\footnotesize $\bm Z_{2}$};
	
	\draw[->, thick, OliveGreen] (3.75,0.4) -- (5,2.1) node[pos=0.5,sloped,above, black] {\footnotesize $\tilde{\bm A}_{N},\tilde{\bm B}_{N}$};
	\draw[<-, thick, red!60] (4,0.4) -- (5.25,2.1) node[pos=0.6,sloped,below, black] {\footnotesize $\bm Z_{N}$};	
	
	\node (inp) at (-0.1,0) {$\bm A,\bm B$};
	\draw[->, very thick, OliveGreen] (0.5,0) -- (1.45,0) node[pos=0.4,sloped,below, black, scale=0.8] {$\bm f,\bm g$};
	\draw[->, very thick, red!60] (4.55,0) -- (5.5,0) node[pos=0.5,sloped,below, black, scale=0.8] {$d(\cdot)$};;	
	\node (out) at (6,0) {$\bm A\bm B$};
	\end{tikzpicture}
	\caption{\footnotesize System model of the two-sided distributed matrix multiplication problem.}	
	\label{fig:SymMod}
\end{figure}

\section{System Model}
\label{sec:sym_mod}

In a fully, or two-sided, secure matrix multiplication problem, a user is interested in computing the matrix product $\boldsymbol{A}\boldsymbol{B}$ of two \emph{private} matrices $\boldsymbol{A}\in\mathbb{F}^{m\times n}$ and $\boldsymbol{B}\in\mathbb{F}^{n\times p}$\footnote{Each element is from a sufficiently large field $\mathbb{F}$.} securely (see Fig. \ref{fig:SymMod}). Hereby, the user has access to a distributed computation system consisting of $N$ honest, but curious computation servers connected to the user by private, error-free links. The user seeks the support of these servers but aims at concealing $\boldsymbol{A}$ and $\boldsymbol{B}$ from the servers. 

To this end, the user deploys encoding functions $f_i$ and $g_i$ to generate securely encoded matrices
$\tilde{\boldsymbol{A}}_i=f_i(\boldsymbol{A})$ and $\tilde{\boldsymbol{B}}_i=g_i(\boldsymbol{B})$ which are sent to the $i$-th server. The set of all $N$ encoding functions with respect to matrices $\boldsymbol{A}$ and $\boldsymbol{B}$ are denoted by $\boldsymbol{f}=(f_1,\ldots,f_N)$ and $\boldsymbol{g}=(g_1,\ldots,g_N)$, respectively.

Since every server $i$ is by assumption honest, the answer of the $i$-th server denoted by $\boldsymbol{Z}_i$ is a \emph{deterministic} function\footnote{This function is known by the user.} of $\tilde{\boldsymbol{A}}_i$ and $\tilde{\boldsymbol{B}}_i$, i.e.,
\begin{align*}
	H(\bm Z_i|\tilde{\boldsymbol{A}}_i,\tilde{\boldsymbol{B}}_i)=0.
\end{align*}
The user has to be able to determine $\boldsymbol{A}\boldsymbol{B}$ after applying the decoding function $d(\cdot)$ on the collection of all $N$ answers $\boldsymbol{Z}_1,\ldots,\boldsymbol{Z}_N$. i.e., $\boldsymbol{A}\boldsymbol{B}=d(\boldsymbol{Z}_1,\ldots,\boldsymbol{Z}_N)$, or information-theoretically satisfy the \emph{decodability constraint} 
\begin{align}
H(\boldsymbol{A}\boldsymbol{B}|\boldsymbol{Z}_1,\ldots,\boldsymbol{Z}_N)=0.
\end{align}

In this paper, we study the $(N,\ell)$ fully secure matrix multiplication problem. In this setting, security has to be preserved when $\ell\leq N$ servers may collude. In other words, despite having access to the collection of encoded matrices $\tilde{\boldsymbol{A}}_{\mathcal{L}}$ and $\tilde{\boldsymbol{B}}_{\mathcal{L}}$, $\mathcal{L}\subseteq[N],|\mathcal{L}|=\ell$, secrecy has to be maintained. Thus, $\tilde{\boldsymbol{A}}_{\mathcal{L}}$ and $\tilde{\boldsymbol{B}}_{\mathcal{L}}$ do not reveal any information on the private matrices $\boldsymbol{A}$ and $\boldsymbol{B}$. This is expressed information-theoretically by the \emph{security constraint} 
\begin{align}
I(\tilde{\boldsymbol{A}}_{\mathcal{L}},\tilde{\boldsymbol{B}}_{\mathcal{L}};\boldsymbol{A},\boldsymbol{B})=0,\quad\forall\mathcal{L}\subseteq[N],|\mathcal{L}|=\ell.
\end{align}
Next, we define two conflicting metrics -- $(i)$ \emph{rate} and $(ii)$ \emph{recovery threshold} -- which we seek to optimize in subsequent sections.
 
First, we say the rate $R_{N,\ell}$ is \emph{achievable} if there exists $\boldsymbol{f},\boldsymbol{g}$ and $d(\cdot)$ satisfying the decodability and security constraints. The rate $R_{N,\ell}$ is the ratio between the number of desired bits vs. the number of downloaded bits and is thus given by 
\begin{align}
R_{N,\ell}=\frac{H(\boldsymbol{A}\boldsymbol{B})}{\sum_{i=1}^{N}H(\boldsymbol{Z}_i)}.
\end{align} The \emph{capacity} $C$ is the supremum of $R_{N,\ell}$ over all achievable schemes.

Second, we call a secure matrix multiplication strategy to be \emph{$\omega_{N,\ell}$-securely recoverable} if the user can recover the matrix product $\bm A\bm B$ from results of $\Omega\subseteq[N],|\Omega|=\omega_{N,\ell}$ servers while complying with the security constraint when any combination of $\ell\leq N$ servers collude. The \emph{recovery threshold} is the \emph{minimum} integer $\omega_{N,\ell}$ such that the multiplication scheme composed of encoders $\boldsymbol{f},\boldsymbol{g}$ and decoder $d(\cdot)$ is $\omega_{N,\ell}$-securely recoverable.

\section{Aligned secret sharing scheme with Matrix Partition}

In an $(N,\ell)$ fully secure matrix multiplication problem, a user is interested in computing $\bm A\bm B$ using $N$ servers without revealing $\ell$ colluding servers information about $\bm A$ and $\bm B$. To this end, the user breaks $\bm A$ vertically into $r_{\bm A}$ sub-matrices and $\bm B$ horizontally into $r_{\bm B}$ sub-matrices, i.e.,
\begin{align*}
\bm A= \begin{bmatrix}
\bm A_1 \\
\bm A_2 \\
\vdots \\
\bm A_{r_{\bm A}}
\end{bmatrix} \text{ and } \bm B= \begin{bmatrix}
\bm B_1 & \bm B_2 & \hdots & \bm B_{r_{\bm B}}
\end{bmatrix}.
\end{align*} Thus, we get $\bm A$ and $\bm B$ by concatenating the sub-matrices $\bm A_i \in \mathbb{F}^{(m/r_{\bm A})\times n},\; i \in [r_{\bm A}]$ and $\bm B_j \in \mathbb{F}^{n\times (p/r_{\bm B})},\; j\in[r_{\bm B}]$. \textcolor{black}{The number of rows $m$ and $n$ are multiple of $r_{\bm A}$ and $r_{\bm B}$, respectively. Under the proposed matrix partition, the matrix product is given by} 
\begin{align*}
\begin{bmatrix}
\bm A_1 \bm B_1 & \bm A_1 \bm B_2 & \hdots & \bm A_1 \bm B_{r_{\bm B}} \\
\bm A_2 \bm B_1 & \bm A_2 \bm B_2 & \hdots & \bm A_2 \bm B_{r_{\bm B}} \\
\vdots & \vdots & \ddots & \vdots \\
\bm A_{r_{\bm A}} \bm B_1 & \bm A_{r_{\bm A}} \bm B_2 & \hdots & \bm A_{r_{\bm A}} \bm B_{r_{\bm B}}
\end{bmatrix}.
\end{align*} The user encodes the matrices $\bm A$ and $\bm B$ according to 
\begin{align*}
\tilde{\bm A}_i & = \sum\limits_{j=1}^{r_{\bm A}} \bm A_j x_i ^{(j-1)} + \sum\limits_{k=1}^{\ell} \bm K_{\bm A_k} x_{i}^{(k+r_{\bm A}-1)},\\
\tilde{\bm B}_i  & = \sum\limits_{j=1}^{r_{\bm B}} \bm B_j x_i ^{\left(j-1\right)\left(r_{\bm A}+\ell\right)} + \sum\limits_{k=1}^{\ell} \bm K_{\bm B_k} x_{i}^{\left(k+r_{\bm A}-1\right)+\left(r_{\bm B}-1\right)\left(r_{\bm A}+\ell\right)},
\end{align*} where all entries of matrices $\bm K_{\bm A_1},\ldots, \bm K_{\bm A_\ell}\in \mathbb{F}^{(m/r_{\bm A})\times n}$ and $\bm K_{\bm B_1},\ldots, \bm K_{\bm B_\ell}\in \mathbb{F}^{n\times (p/r_{\bm B})}$ are i.i.d. uniform random variables. The exponents of the $x_i$ are carefully chosen to facilitate the alignment of undesired components \cite{Kakar17}. Details are discussed in the next paragraph. The user sends the pair $\big(\tilde{\bm A}_{i},\tilde{\bm B}_{i}\big)$ to the server where server $i$ in return 
computes $\bm Z_i=\tilde{\bm A}_{i}\tilde{\bm B}_{i}$ and sends its \emph{answer} $\bm Z_i$ back to the user. The user seeks to retrieve $\bm A\bm B$ by observing up to $N$ polynomials $p(x_i)$, $i=1,\ldots,N$ of degree $Q_{N,\ell}-1$, where $Q_{N,\ell}\triangleq(r_{\bm A}+\ell)(r_{\bm B}+1)-1$\footnote{We frequently omit using the first or the second subscript when $N$ or $\ell$ remain constant, e.g., we simply write $Q$ to denote $Q_{N,\ell}$ for constant $(N,\ell)$. In almost all cases where $N$ is of no concern, we omit the first index and write $Q_{\ell}$.}. The polynomial corresponds to $p(x)=\sum_{i=1}^{4}p_{i}(x)$ and is given by
\begin{align}
p\left( x\right)&=\underbrace{\sum\limits_{j=1}^{r_{\bm A}}\sum\limits_{j'=1}^{\alert{r_{\bm B}}} \bm A_j \bm B_{j'} x^{j+\left(j'-1\right)\left(r_{\bm A}+\ell\right)-1}}_{\triangleq \textcolor{blue}{p_1(x)}}+\underbrace{\sum\limits_{k=1}^{\ell}\sum\limits_{j'=1}^{r_{\bm B}} \bm K_{\bm A_k} \bm B_{j'} x^{k+r_{\bm A}+\left(j'-1\right)\left(r_{\bm A}+\ell\right)-1}}_{\triangleq \textcolor{magenta}{p_2(x)}}\nonumber\\&+\underbrace{\sum\limits_{j=1}^{r_{\bm A}}\sum\limits_{k'=1}^{\ell} \bm A_j \bm K_{\bm B_{k'}} x^{j+k'+\left(r_{\bm A}-1\right)+\left(r_{\bm B}-1\right)\left(r_{\bm A}+\ell\right)-1}}_{\triangleq \textcolor{olive}{p_3(x)}}+\underbrace{\sum\limits_{k=1}^{\ell}\sum\limits_{k'=1}^{\ell} \bm K_{\bm A_k} \bm K_{\bm B_{k'}} x^{k+k^{'}+2\left(r_{\bm A}-1\right)+\left(r_{\bm B}-1\right)\left(r_{\bm A}+\ell\right)}}_{\triangleq \textcolor{teal}{p_4(x)}}.
\end{align}

\begin{figure*}
	\centering  
	\begin{tikzpicture}[decoration=brace]
	\draw(0,0)--(12.35,0);
	\foreach \x/\xtext in {0/{\scriptsize 0},0.65/{\scriptsize 1},1.3/{\scriptsize 2},2.6/{\scriptsize $r_{\bm A}-1$},3.25/{\scriptsize $r_{\bm A}$},6.5/{\scriptsize $r_{\bm A}+\ell$},9.75/{\scriptsize $2(r_{\bm A}+\ell)-\ell$},12.35/{\scriptsize $2(r_{\bm A}+\ell)-1$}} 
	\draw(\x,5pt)--(\x,-5pt) node[below] {\xtext};
	\draw[blue,fill=blue] (0,0) circle (0.5ex);
	\draw[blue,fill=blue] (0.65,0) circle (.5ex);
	\draw[blue,fill=blue] (1.3,0) circle (.5ex);
	\draw[blue,fill=blue] (1.95,0) circle (.5ex);
	\draw[blue,fill=blue] (2.6,0) circle (.5ex);
	\draw[magenta,fill=magenta] (3.25,0) circle (.5ex);  
	\draw[magenta,fill=magenta] (3.9,0) circle (.5ex);
	\draw[magenta,fill=magenta] (4.55,0) circle (.5ex);
	\draw[magenta,fill=magenta] (5.2,0) circle (.5ex);
	\draw[magenta,fill=magenta] (5.85,0) circle (.5ex);  
	\draw[blue,fill=blue] (6.5,0) circle (.5ex);
	\draw[blue,fill=blue] (7.15,0) circle (.5ex);
	\draw[blue,fill=blue] (7.8,0) circle (.5ex);
	\draw[blue,fill=blue] (8.45,0) circle (.5ex);
	\draw[blue,fill=blue] (9.1,0) circle (.5ex);
	\draw[magenta,fill=magenta] (9.75,0) circle (.5ex);
	\draw[magenta,fill=magenta] (10.4,0) circle (.5ex);  
	\draw[magenta,fill=magenta] (11.05,0) circle (.5ex);
	\draw[magenta,fill=magenta] (11.7,0) circle (.5ex);
	\draw[magenta,fill=magenta] (12.35,0) circle (.5ex); 
	\draw [<-,yshift=1.5ex] (2.6,0) -- (1.3,0) node[above=0.0ex] {{\scriptsize \textcolor{blue}{$r_{\bm A}$}}};
	\draw [->,yshift=1.5ex] (1.3,0) -- (0,0);
	\draw[decorate, yshift=-4ex] (2.6,0) -- node[below=0.4ex] {{\footnotesize \textcolor{blue}{$p_1(x)$}}} (0,0);
	\draw[decorate, yshift=-4ex] (5.85,0) -- node[below=0.4ex] {{\footnotesize\textcolor{magenta}{$p_2(x)$}}} (3.25,0);
	\draw [<-,yshift=1.5ex] (9.1,0) -- (7.8,0) node[above=0.0ex] {{\scriptsize \textcolor{blue}{$r_{\bm A}$}}};
	\draw [->,yshift=1.5ex] (7.8,0) -- (6.5,0);
	\draw[decorate, yshift=-4ex] (9.1,0) -- node[below=0.4ex] {{\footnotesize \textcolor{blue}{$p_1(x)$}}} (6.5,0);    
	\draw[decorate, yshift=-4ex] (12.35,0) -- node[below=0.4ex] {{\footnotesize\textcolor{magenta}{$p_2(x)$}}} (9.75,0);
	
	\foreach \x in {0.65,1.3,...,3.25}  \draw (\x,0) -- (\x,-3pt);
	\foreach \x in {3.25,3.9,...,6.5} \draw (\x,0) -- (\x,-3pt);
	\foreach \x in {6.5,7.15,...,9.75} \draw (\x,0) -- (\x,-3pt);
	\foreach \x in {9.75,10.4,...,12.35} \draw (\x,0) -- (\x,-3pt);

	\node at (1.65,-0.2) {\ldots};
	\node at (4.9,-0.2) {\ldots};
	\node at (8.15,-0.2) {\ldots};
	\node at (11.4,-0.2) {\ldots};
	\node at (0,-1.25) {\vdots};

	\begin{scope}[shift={(0,-2.2)}]
	\draw(0,0)--(12.35,0);
	\foreach \x/\xtext in {0/{\scriptsize  $(r_{B}-3)(r_{\bm A}+\ell)$},3.25/{\scriptsize $(r_{B}-2)(r_{\bm A}+\ell ) -\ell$},6.5/{\scriptsize $(r_{B}-2)(r_{\bm A}+\ell ) $},9.75/{\scriptsize $(r_{B}-1)(r_{\bm A}+\ell)-\ell$ },12.35/{\scriptsize $(r_{B}-1)(r_{\bm A}+\ell)-1$}}
	\draw(\x,5pt)--(\x,-5pt) node[below] {\xtext};
	\draw[blue,fill=blue] (0,0) circle (.5ex);
	\draw[blue,fill=blue] (0.65,0) circle (.5ex);
	\draw[blue,fill=blue] (1.3,0) circle (.5ex);
	\draw[blue,fill=blue] (1.95,0) circle (.5ex);
	\draw[blue,fill=blue] (2.6,0) circle (.5ex);
	\draw[magenta,fill=magenta] (3.25,0) circle (.5ex);  
	\draw[magenta,fill=magenta] (3.9,0) circle (.5ex);
	\draw[magenta,fill=magenta] (4.55,0) circle (.5ex);
	\draw[magenta,fill=magenta] (5.2,0) circle (.5ex);
	\draw[magenta,fill=magenta] (5.85,0) circle (.5ex);  
	\draw[blue,fill=blue] (6.5,0) circle (.5ex);
	\draw[blue,fill=blue] (7.15,0) circle (.5ex);
	\draw[blue,fill=blue] (7.8,0) circle (.5ex);
	\draw[blue,fill=blue] (8.45,0) circle (.5ex);
	\draw[blue,fill=blue] (9.1,0) circle (.5ex);
	\draw[magenta,fill=magenta] (9.75,0) circle (.5ex);
	\draw[magenta,fill=magenta] (10.4,0) circle (.5ex);  
	\draw[magenta,fill=magenta] (11.05,0) circle (.5ex);
	\draw[magenta,fill=magenta] (11.7,0) circle (.5ex);
	\draw[magenta,fill=magenta] (12.35,0) circle (.5ex);
	\draw [<-,yshift=1.5ex] (2.6,0) -- (1.3,0) node[above=0.0ex] {{\scriptsize \textcolor{blue}{$r_{\bm A}$}}};
	\draw [->,yshift=1.5ex] (1.3,0) -- (0,0);
	\draw[decorate, yshift=-4ex] (2.6,0) -- node[below=0.4ex] {{\footnotesize \textcolor{blue}{$p_1(x)$}}} (0,0);
	\draw [<-,yshift=1.5ex] (9.1,0) -- (7.8,0) node[above=0.0ex] {{\scriptsize 		    \textcolor{blue}{$r_{\bm A}$}}};
	\draw [->,yshift=1.5ex] (7.8,0) -- (6.5,0);
	\draw[decorate, yshift=-4ex] (5.85,0) -- node[below=0.4ex] {{\footnotesize \textcolor{magenta}{$p_2(x)$}}} (3.25,0);
	\draw[decorate, yshift=-4ex] (9.1,0) -- node[below=0.4ex] {{\footnotesize \textcolor{blue}{$p_1(x)$}}} (6.5,0);    
	\draw[decorate, yshift=-4ex] (12.35,0) -- node[below=0.4ex] {{\footnotesize\textcolor{magenta}{$p_2(x)$}}} (9.75,0);
	
	\foreach \x in {0.65,1.3,...,3.25}  \draw (\x,0) -- (\x,-3pt);
	\foreach \x in {3.25,3.9,...,6.5} \draw (\x,0) -- (\x,-3pt);
	\foreach \x in {6.5,7.15,...,9.75} \draw (\x,0) -- (\x,-3pt);
	\foreach \x in {9.75,10.4,...,12.35} \draw (\x,0) -- (\x,-3pt);

	\node at (1.65,-0.2) {\ldots};
	\node at (4.9,-0.2) {\ldots};
	\node at (8.15,-0.2) {\ldots};
	\node at (11.4,-0.2) {\ldots};
	\end{scope}

	\begin{scope}[shift={(0,-4.4)}]
	\draw(0,0)--(12.35,0);
	\foreach \x/\xtext in {0/{\scriptsize  $(r_{B}-1)(r_{\bm A}+\ell)$},3.25/{\scriptsize  $r_{B}(r_{\bm A}+\ell ) -\ell$},6.5/{\scriptsize $r_{B}(r_{\bm A}+\ell ) $},9.75/{\scriptsize $(r_{\bm B}+1)(r_{\bm A} +\ell)-\ell$ },12.35/{\scriptsize $(r_{\bm B}+1)(r_{\bm A} +\ell)-2$}}
	\draw(\x,5pt)--(\x,-5pt) node[below] {\xtext};
	\draw[blue,fill=blue] (0,0) circle (.5ex);
	\draw[blue,fill=blue] (0.65,0) circle (.5ex);
	\draw[blue,fill=blue] (1.3,0) circle (.5ex);
	\draw[blue,fill=blue] (1.95,0) circle (.5ex);
	\draw[blue,fill=blue] (2.6,0) circle (.5ex);
	\draw[magenta,fill=magenta] (3.25,0) circle (.5ex);  
	\draw[olive,fill=olive] (3.25,1ex) circle (.5ex);
	\draw[teal,fill=teal] (3.25,2ex) circle (.5ex);    
	\draw[magenta,fill=magenta] (3.9,0) circle (.5ex);
	\draw[olive,fill=olive] (3.9,1ex) circle (.5ex);
	\draw[teal,fill=teal] (3.9,2ex) circle (.5ex);
	\draw[magenta,fill=magenta] (4.55,0) circle (.5ex);
	\draw[olive,fill=olive] (4.55,1ex) circle (.5ex);
	\draw[teal,fill=teal] (4.55,2ex) circle (.5ex);
	\draw[magenta,fill=magenta] (5.2,0) circle (.5ex);
	\draw[olive,fill=olive] (5.2,1ex) circle (.5ex);
	\draw[teal,fill=teal] (5.2,2ex) circle (.5ex);
	\draw[magenta,fill=magenta] (5.85,0) circle (.5ex); 
	\draw[olive,fill=olive] (5.85,1ex) circle (.5ex);
	\draw[teal,fill=teal] (5.85,2ex) circle (.5ex); 
	\draw[olive,fill=olive] (6.5,0) circle (.5ex);
	\draw[teal,fill=teal] (6.5,1ex) circle (.5ex);
	\draw[olive,fill=olive] (7.15,0) circle (.5ex);
	\draw[teal,fill=teal] (7.15,1ex) circle (.5ex);
	\draw[olive,fill=olive] (7.8,0) circle (.5ex);
	\draw[teal,fill=teal] (7.8,1ex) circle (.5ex);
	\draw[olive,fill=olive] (8.45,0) circle (.5ex);
	\draw[teal,fill=teal] (8.45,1ex) circle (.5ex);
	\draw[teal,fill=teal] (9.1,0) circle (.5ex);
	\draw[teal,fill=teal] (9.75,0) circle (.5ex);
	\draw[teal,fill=teal] (10.4,0) circle (.5ex);  
	\draw[teal,fill=teal] (11.05,0) circle (.5ex);
	\draw[teal,fill=teal] (11.7,0) circle (.5ex);
	\draw[teal,fill=teal] (12.35,0) circle (.5ex);
	\draw [<-,yshift=1.5ex] (2.6,0) -- (1.3,0) node[above=0.0ex] {{\scriptsize \textcolor{blue}{$r_{\bm A}$}}};
	\draw [->,yshift=1.5ex] (1.3,0) -- (0,0);
	\draw[decorate, yshift=-4ex] (2.6,0) -- node[below=0.4ex] {{\footnotesize \textcolor{blue}{$p_1(x)$}}} (0,0);
	\draw[decorate, yshift=-4ex] (5.85,0) -- node[below=0.4ex] {{\footnotesize$\textcolor{magenta}{p_2(x)}, \; \textcolor{olive}{p_3(x)}, \; \textcolor{teal}{p_4(x)}$}} (3.25,0);
	\draw[decorate, yshift=-4ex] (8.45,0) -- node[below=0.4ex] {{\footnotesize$\textcolor{olive}{p_3(x)}, \; \textcolor{teal}{p_4(x)}$}} (6.5,0);    
	\draw[decorate, yshift=-4ex] (12.35,0) -- node[below=0.4ex] {{\footnotesize\textcolor{teal}{$p_4(x)$}}} (9.1,0);
	\draw[decorate, yshift=-8ex] (12.35,0) -- node[below=0.4ex] {{ \small{Alignment}}} (3.25,0);
	\foreach \x in {0.65,1.3,...,3.25}  \draw (\x,0) -- (\x,-3pt);
	\foreach \x in {3.25,3.9,...,6.5} \draw (\x,0) -- (\x,-3pt);
	\foreach \x in {6.5,7.15,...,9.75} \draw (\x,0) -- (\x,-3pt);
	\foreach \x in {9.75,10.4,...,12.35} \draw (\x,0) -- (\x,-3pt);

	\node at (1.65,-0.2) {\ldots};
	\node at (4.9,-0.2) {\ldots};
	\node at (8.15,-0.2) {\ldots};
	\node at (11.4,-0.2) {\ldots};
	\end{scope}
	\end{tikzpicture}
	\caption{\small Number line of the exponent of the polynomial $p(x)$ and its association to the terms $p_{i}(x),i=1,\ldots,4$.}
	\label{fig:num_line}  
\end{figure*}
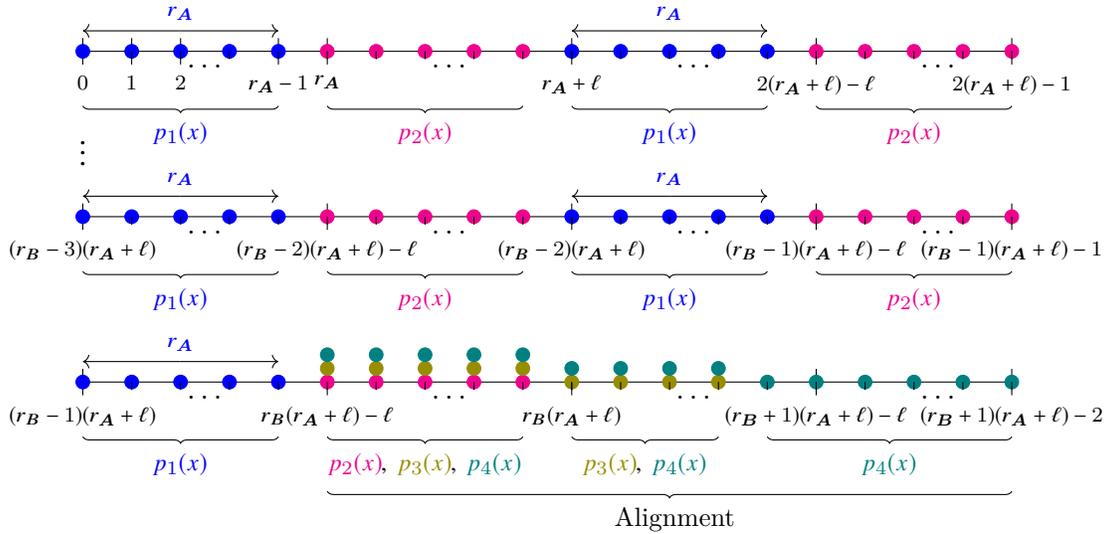 

To reconstruct $\bm A\bm B$, the user is interested in $p_1(x)$. The remaining terms $p_{i}(x),i=2,\ldots,4,$ can be thought of \emph{interference}. Thus, with respect to $p_1(x)$ each exponent in $x$ needs to have only one attributable item $\bm A_j\bm B_{j'}$ to distinguish desired components from each other and also from undesired components $\bm K_{\bm A_k} \bm B_{j'}$, $\bm A_j \bm K_{\bm B_{k'}}$ and $\bm K_{\bm A_k} \bm K_{\bm B_{k'}}$. One can verify that each exponent of $p_1(x)$ does not occur in the remaining undesired terms $p_{i}(x),i=2,\ldots,4$. In contrast, there are multiple items assigned to the remaining exponents not being included in $p_{1}(x)$. In other words, we \emph{align} multiple undesired items to single exponents. Thus, this scheme is called an \emph{aligned secret sharing} scheme. A pictorial representation of the association of components to exponents is provided in Fig. \ref{fig:num_line}. The exponents of $x_i$ in $\tilde{\bm A}_i$ and $\tilde{\bm B}_i$ are chosen in the encoding process to avoid overlaps of desired terms $(i)$ with each other and $(ii)$ with undesired terms while simultaneously create as many alignment opportunities as possible when computing $\bm Z_i=\tilde{\bm A}_i\tilde{\bm B}_i$. The desired terms consume $r_{\bm A}r_{\bm B}$ exponents while the interference occupies $\ell(r_{\bm B}+1)+r_{\bm A}-1$ exponents. More specifically, the first $\ell(r_{\bm B}-1)$ components of $p_{2}(x)$ are not aligned with other interference components of $p_{3}(x)$ and $p_{4}(x)$. In contrast, the remaining $r_{\bm A}+2\ell-1$ exponents of $p_2(x)$, $p_{3}(x)$ or $p_{4}(x)$ are subject to (subspace) alignment of at least two components.  

Recall that the polynomial $p(x)$ has a degree of $Q-1$ and the user has access to $N$ observations. In order to enable decoding, we have to ensure that the degree of the polynomial does not exceed the total number of available servers, or observations, $N$, i.e., 
\begin{align}
\label{PIneq}
\alert{Q_{N,\ell}(r_{\bm A},r_{\bm B})}\triangleq\left(r_{\bm A}+\ell\right)\left(r_{\bm B}+1\right)-1\leq N.
\end{align} Thus, the user can retrieve its desired items by using polynomial interpolation. Since the user recovers $r_{\bm A}r_{\bm B}$ desired items out of $\alert{Q_{N,\ell}}(r_{\bm A},r_{\bm B})$ calculated items, the aligned secret sharing scheme achieves a rate of 
\begin{align*}
\alert{R_{N,\ell}(r_{\bm A},r_{\bm B})\triangleq}\frac{r_{\bm A}r_{\bm B}}{\alert{Q_{N,\ell}(r_{\bm A},r_{\bm B})}}=\frac{r_{\bm A}r_{\bm B}}{\left(r_{\bm A}+\ell\right)\left(r_{\bm B}+1\right)-1}.
\end{align*}
In order to maximize the rate \alert{$R_{N,\ell}(r_{\bm A},r_{\bm B})$\footnote{For the sake of simplicity, the notation of $R_{N,\ell}(r_{\bm A},r_{\bm B})$ is aligned with that of $Q_{N,\ell}(r_{\bm A},r_{\bm B})$.}}, we need to solve the optimization problem
\begin{subequations}\label{opt}
	\begin{alignat}{2}
		\max_{r_{\bm A},r_{\bm B}} \qquad & \frac{r_{\bm A}r_{\bm B}}{\left(r_{\bm A}+\ell\right)\left(r_{\bm B}+1\right)-1} \tag{\ref{opt}} \\
		\text{subject to} \qquad & \left(r_{\bm A}+\ell\right)\left(r_{\bm B}+1\right)-1\leq N\\
		&  r_{\bm A}, r_{\bm B} \in \alert{\mathbb{Z}^{+}}.
	\end{alignat}
\end{subequations} We denote the optimal decision variables and objective value of \eqref{opt} by $(r^{\star}_{\bm A},r^{\star}_{\bm B})$ and $R^{\star}$.  

Further, we note that the effective number of server observations the user needs to determine the matrix product $\bm A\bm B$ is $Q$. Thus, the aligned secret sharing strategy is \emph{$Q$-securely recoverable}. In order to make the aligned secret sharing scheme less prone to slower computing servers, or \emph{stragglers}, one has to solve the optimization problem 
\begin{subequations}\label{opt_recover_th}
	\begin{alignat}{3}
	\min_{r_{\bm A},r_{\bm B}} \qquad & \left(r_{\bm A}+\ell\right)\left(r_{\bm B}+1\right)-1 \tag{\ref{opt_recover_th}} \\
	\text{subject to} \qquad & \frac{r_{\bm A}r_{\bm B}}{\left(r_{\bm A}+\ell\right)\left(r_{\bm B}+1\right)-1}\geq R_{\text{th}} \\
	& \left(r_{\bm A}+\ell\right)\left(r_{\bm B}+1\right)-1\leq N\\
	& r_{\bm A}, r_{\bm B} \in \alert{\mathbb{Z}^{+}}.
	\end{alignat}
\end{subequations}

The optimization problems \eqref{opt} and \eqref{opt_recover_th} find the best choice of how to partition the left and right matrices at the user for given $N$ and $\ell$ (and a minimum rate requirement $R_{\text{th}}\leq R^{\star}$ in \eqref{opt_recover_th}). For both problems, we propose close-to-optimal analytical solutions. The solutions to the rate maximization problem \eqref{opt} and $Q$-secure recoverability problem are stated in Theorem \ref{th:res} and \ref{th:res_recover_th}, respectively. To differentiate the (optimal) solution of \eqref{opt_recover_th} from \eqref{opt}, we use the breve mark $(\:\breve{}\:)$ instead of the star symbol $(^{\star})$, e.g., $\breve{Q}$ instead of $Q^{\star}$.  

\section{Solution of the Matrix Partitioning Problem \eqref{opt} and Discussion}
\label{sec:main_result_discussion} 

\begin{theorem}
	\label{th:res}
	The solution $(\hat{r}_{\bm A},\hat{r}_{\bm B})$ is a close-to-optimal analytical solution to the optimization problem \eqref{opt} for given parameters $N$ and $\ell$. Hereby, 
	\begin{align}\label{eq:est_r_b}
	\hat{r}_{\bm B}=\max\Bigg\{1,\:\biggl\lceil-\frac{3}{2}+\sqrt{\frac{1}{4}+\frac{N}{\ell}}\:\biggr\rceil\:\Bigg\}
	\end{align} 
	and $\hat{r}_{\bm A}$ is the largest possible integer $r_{\bm A}\geq 1$ that satisfies the inequality
	\begin{align}\label{eq:est_r_a}
	(r_{\bm A}+\ell)(\hat{r}_{\bm B}+1)-1\leq N.
	\end{align}  
\end{theorem}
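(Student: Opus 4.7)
The plan is to decouple the two-dimensional integer optimisation in \eqref{opt} into a nested one-dimensional problem: first solve the inner problem over $r_{\bm A}$ for every fixed $r_{\bm B}$, then optimise the resulting univariate objective over $r_{\bm B}$.

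For the inner step I would compute
\begin{align*}
\frac{\partial R_{N,\ell}}{\partial r_{\bm A}} = \frac{r_{\bm B}\bigl[(r_{\bm B}+1)\ell - 1\bigr]}{\bigl[(r_{\bm A}+\ell)(r_{\bm B}+1)-1\bigr]^{2}},
\end{align*}
which is strictly positive whenever $\ell,r_{\bm B}\geq 1$. Hence $R_{N,\ell}$ is monotonically increasing in $r_{\bm A}$ for any feasible $r_{\bm B}$, so the optimal integer $r_{\bm A}$ is the \emph{largest} one still satisfying the capacity constraint \eqref{PIneq}. This is exactly the $\hat r_{\bm A}$ defined through \eqref{eq:est_r_a}, yielding the second half of the theorem essentially for free.

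For the outer step I would substitute $r_{\bm A}$ by its continuous boundary value $r_{\bm A}^{\mathrm c}(r_{\bm B})=\tfrac{N+1}{r_{\bm B}+1}-\ell$, which tightens the capacity constraint to equality and reduces the rate to the univariate function
\begin{align*}
\tilde R(r_{\bm B}) = \frac{1}{N}\!\left[\frac{(N+1)\,r_{\bm B}}{r_{\bm B}+1}-\ell\,r_{\bm B}\right].
\end{align*}
The crux is then the discrete forward difference
\begin{align*}
\tilde R(r_{\bm B}+1)-\tilde R(r_{\bm B}) = \frac{1}{N}\!\left[\frac{N+1}{(r_{\bm B}+1)(r_{\bm B}+2)}-\ell\right],
\end{align*}
which is positive iff $(r_{\bm B}+1)(r_{\bm B}+2)<(N+1)/\ell$ and negative iff the reverse strict inequality holds. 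This establishes unimodality of $\tilde R$ over the integers, so the maximiser is the smallest $r_{\bm B}\in\mathbb Z^{+}$ with $(r_{\bm B}+1)(r_{\bm B}+2)\geq (N+1)/\ell$. Explicitly solving this quadratic gives the threshold $r_{\bm B}\geq -\tfrac{3}{2}+\sqrt{\tfrac{1}{4}+(N+1)/\ell}$; swapping $N+1$ for $N$ inside the radical (a simplification of lower order in $N$) yields the closed form in \eqref{eq:est_r_b}, and the outer $\max\{1,\cdot\}$ covers the degenerate regime $(N+1)/\ell<2$ where the critical value falls below $1$.

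The main obstacle will be bridging the continuous relaxation and the true integer problem: $\hat r_{\bm A}$ in \eqref{eq:est_r_a} is obtained via a floor, so the rate actually attained at $(\hat r_{\bm A},\hat r_{\bm B})$ may differ slightly from $\tilde R(\hat r_{\bm B})$, and replacing $N+1$ by $N$ inside the radical can shift the ceiling by one in borderline instances (for example when $N+1$ is almost a multiple of $\hat r_{\bm B}+1$). This is precisely why the pair $(\hat r_{\bm A},\hat r_{\bm B})$ is only claimed to be \emph{close-to-optimal}; a complete argument would bound the residual gap by showing that any unit deviation in $r_{\bm B}$ changes $R_{N,\ell}$ by at most $O(1/N)$, so that the solution is asymptotically exact and in any given finite instance can be certified as optimal by comparing a constant number of neighbouring candidate pairs.
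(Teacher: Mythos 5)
Your proposal is sound at the same level of rigor as the paper's own argument, but it follows a genuinely different route. Your computations check out: the partial derivative $\partial R/\partial r_{\bm A}=r_{\bm B}[\ell(r_{\bm B}+1)-1]/Q^2>0$ does show that, for fixed $r_{\bm B}$, the truly optimal integer $r_{\bm A}$ is the largest one respecting $Q\leq N$, which gives \eqref{eq:est_r_a} exactly; and the forward difference of your relaxed objective $\tilde R$ is indeed $\frac{1}{N}[\frac{N+1}{(r_{\bm B}+1)(r_{\bm B}+2)}-\ell]$, whose sign change yields the threshold $-\frac{3}{2}+\sqrt{\frac{1}{4}+\frac{N+1}{\ell}}$, differing from \eqref{eq:est_r_b} only by the $N+1$ versus $N$ inside the radical — a discrepancy of the same order as the approximations the paper itself makes. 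The paper instead proceeds inductively in $\ell$: it shows the optimum at $\ell_{\max}=\lfloor\frac{N-1}{2}\rfloor$ is $(1,1)$ (Lemma \ref{claim2}), introduces strong feasibility, proves that decrementing $\ell$ while incrementing $r_{\bm A}$ preserves $Q$ and raises the rate (Lemmas \ref{claim3}--\ref{claim4}), characterizes how consecutive optima $\bm r^{\star}_{\ell}$ and $\bm r^{\star}_{\ell-1}$ relate (Lemma \ref{claim5}), and then telescopes the step lengths $\hat d_m$ under the approximations $\ell_{\max}\approx N/2$ and $\hat Q\approx N-1$ to estimate the $\ell$-values at which $r^{\star}_{\bm B}$ increments, arriving at $\hat\ell_m\approx N/(m(m+1))$ and hence \eqref{eq:est_r_b}. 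Your decomposition is shorter, makes the inner step (choice of $\hat r_{\bm A}$ given $\hat r_{\bm B}$) exactly optimal rather than heuristic, and concentrates all approximation in one visible place (tightening the constraint and relaxing $r_{\bm A}$ to a real); its weak point, which you correctly flag, is that unimodality is established only for the relaxed $\tilde R$, not for the true integer-valued rate with the floored $r_{\bm A}$, so the argmax over $r_{\bm B}$ could shift by one in borderline cases. The paper's heavier machinery, by contrast, yields structural information about how the optimizer evolves with $\ell$, which is also what underlies Theorem \ref{th:res_recover_th}; neither proof certifies a worst-case optimality gap, both deferring to the numerical evidence in the remarks.
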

\begin{proof}
	The proof is based on the inductive approach of deriving the relationship between consecutive optimal solution pairs $(r^{\star}_{\ell-1,\bm A},r^{\star}_{\ell-1,\bm B})$ and $(r^{\star}_{\ell,\bm A},r^{\star}_{\ell,\bm B})$. Ultimately, under some additional approximations, this helps us in deriving \eqref{eq:est_r_b} and \eqref{eq:est_r_a}. For further details, we refer the reader to Section \ref{sec:close_opt_sol}.
\end{proof}

\begin{remark}[Upper bound]
	The best information-theoretic upper bound known of the two-sided matrix multiplication problem on the rate is derived in \cite{Chang2018OnTC}. The upper bound of the two-sided model is in fact the one-sided model for which the capacity is known to be $C_{\text{one-sided}}=\frac{N-\ell}{N}$.  
\end{remark}

\begin{remark} 
	Compared to the scheme proposed by Chang and Tandon (CT) \cite[Theorem 2]{Chang2018OnTC}, our aligned secret sharing scheme significantly improves on the communication rate (see Fig. \ref{plot:Rate_over_ell}). This is illustrated in Fig. \ref{plot:Rate_over_ell} when comparing the achievable communication rate of 'Unequal' and 'Equal' with 'CT'. Specifically, while our scheme ensures a non-zero rate for at most $\lfloor\nicefrac{(N-1)}{2}\rfloor$ colluding servers, CTs scheme support only $\lfloor\sqrt{N}-1\rfloor$ colluding servers. Further, appropriate matrix partition is of importance when comparing the achievable rates of optimized (or unequal) and equal $(r_{\bm A}=r_{\bm B})$ partitions in Fig. \ref{plot:Rate_over_ell}. The unequal partitions use the partitioning proposed in Theorem \ref{th:res}.     
\end{remark}

\begin{figure}[h]
	\centering
	\input{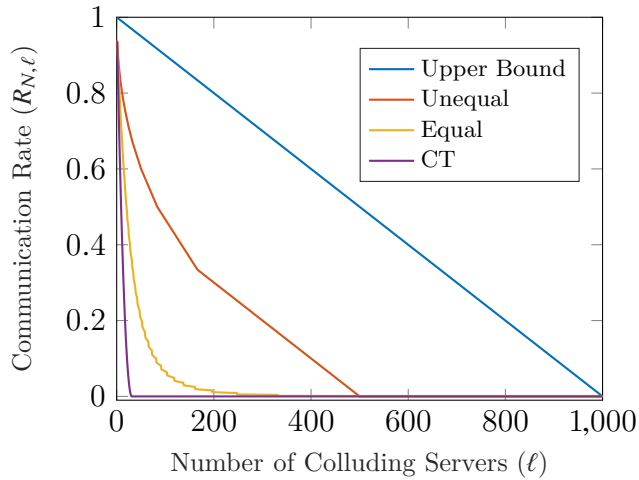}
	\caption{\footnotesize Comparison between the achievable communication rates for (i) one-sided secure matrix multiplication (which is an upper bound on two-sided multiplication), (ii) unequally and (iii) equally partitioned aligned secret sharing scheme and (iv) the scheme proposed by Chang and Tandon for $N=1000$ as a function of the number of colluding servers $\ell$.}
	\label{plot:Rate_over_ell}
\end{figure}


\begin{figure}[h]
	\centering
	\input{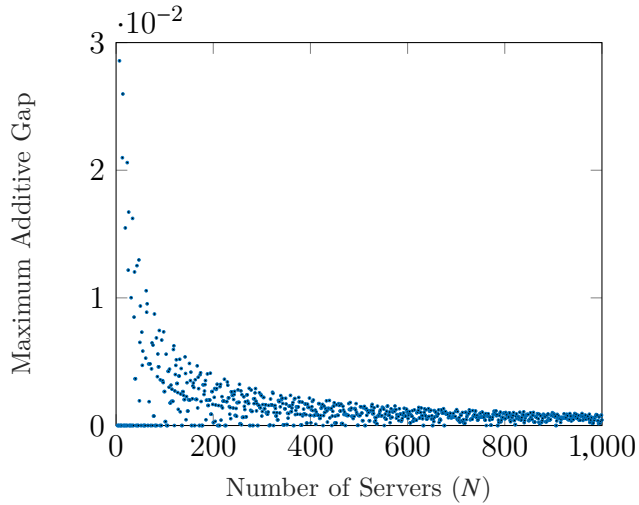}
	\caption{\footnotesize Plot of the maximum additive gap $\max_{\ell\in[N]}|R^{\star}_{\ell}-\hat{R}_{\ell}|$ as a function of $N$. }
	\label{plot:Error_over_N}
\end{figure}
\begin{figure}[h]
	\centering
%
%
\definecolor{mycolor1}{rgb}{0.00000,0.44700,0.74100}%
\begin{tikzpicture}

\begin{axis}[%
width=0.75*4.521*0.75in,
height=0.75*3.566*0.75in,
at={(0.758in,0.481in)},
scale only axis,
xmin=0,
xmax=1000,
xlabel style={font=\color{white!15!black}},
xlabel={\small{Number of Servers ($N$)}},
ymin=0,
ymax=8.2,
ylabel style={font=\color{white!15!black}},
ylabel={\small{$\#$ of Sub-Optimal Solutions}},
axis background/.style={fill=white},
legend style={legend cell align=left, align=left, draw=white!15!black}
]
\addplot[only marks, mark=*, mark options={}, mark size=0.5000pt, draw=mycolor1] table[row sep=crcr]{%
x	y\\
3	0\\
4	0\\
5	0\\
6	0\\
7	1\\
8	0\\
9	0\\
10	0\\
11	0\\
12	0\\
13	1\\
14	1\\
15	0\\
16	0\\
17	0\\
18	0\\
19	2\\
20	0\\
21	0\\
22	0\\
23	1\\
24	0\\
25	1\\
26	1\\
27	0\\
28	0\\
29	0\\
30	0\\
31	2\\
32	0\\
33	0\\
34	2\\
35	0\\
36	0\\
37	1\\
38	1\\
39	1\\
40	1\\
41	0\\
42	0\\
43	3\\
44	0\\
45	0\\
46	0\\
47	2\\
48	1\\
49	1\\
50	1\\
51	0\\
52	0\\
53	2\\
54	2\\
55	2\\
56	0\\
57	0\\
58	0\\
59	0\\
60	0\\
61	1\\
62	2\\
63	1\\
64	1\\
65	0\\
66	0\\
67	3\\
68	1\\
69	2\\
70	2\\
71	0\\
72	0\\
73	1\\
74	2\\
75	0\\
76	2\\
77	2\\
78	2\\
79	3\\
80	0\\
81	0\\
82	0\\
83	2\\
84	0\\
85	1\\
86	1\\
87	3\\
88	3\\
89	1\\
90	0\\
91	2\\
92	0\\
93	0\\
94	2\\
95	2\\
96	2\\
97	2\\
98	2\\
99	1\\
100	1\\
101	1\\
102	1\\
103	4\\
104	0\\
105	0\\
106	0\\
107	3\\
108	3\\
109	1\\
110	1\\
111	1\\
112	1\\
113	1\\
114	1\\
115	2\\
116	2\\
117	2\\
118	3\\
119	1\\
120	0\\
121	1\\
122	1\\
123	1\\
124	1\\
125	1\\
126	1\\
127	3\\
128	1\\
129	2\\
130	2\\
131	1\\
132	1\\
133	2\\
134	3\\
135	0\\
136	0\\
137	2\\
138	2\\
139	3\\
140	1\\
141	1\\
142	1\\
143	3\\
144	1\\
145	2\\
146	1\\
147	1\\
148	1\\
149	1\\
150	1\\
151	4\\
152	1\\
153	2\\
154	4\\
155	0\\
156	0\\
157	1\\
158	1\\
159	3\\
160	3\\
161	0\\
162	0\\
163	3\\
164	1\\
165	1\\
166	1\\
167	4\\
168	3\\
169	2\\
170	1\\
171	0\\
172	0\\
173	2\\
174	2\\
175	2\\
176	0\\
177	0\\
178	0\\
179	2\\
180	2\\
181	2\\
182	2\\
183	1\\
184	1\\
185	1\\
186	2\\
187	5\\
188	3\\
189	2\\
190	2\\
191	1\\
192	1\\
193	2\\
194	3\\
195	1\\
196	1\\
197	3\\
198	3\\
199	3\\
200	1\\
201	1\\
202	1\\
203	3\\
204	1\\
205	2\\
206	4\\
207	4\\
208	4\\
209	0\\
210	0\\
211	2\\
212	0\\
213	0\\
214	2\\
215	3\\
216	1\\
217	2\\
218	2\\
219	1\\
220	2\\
221	2\\
222	2\\
223	5\\
224	0\\
225	0\\
226	0\\
227	4\\
228	4\\
229	4\\
230	3\\
231	1\\
232	1\\
233	2\\
234	2\\
235	3\\
236	1\\
237	3\\
238	3\\
239	2\\
240	2\\
241	2\\
242	1\\
243	1\\
244	3\\
245	1\\
246	1\\
247	4\\
248	2\\
249	3\\
250	3\\
251	2\\
252	1\\
253	2\\
254	3\\
255	1\\
256	1\\
257	3\\
258	1\\
259	4\\
260	2\\
261	2\\
262	2\\
263	4\\
264	2\\
265	2\\
266	2\\
267	2\\
268	2\\
269	3\\
270	3\\
271	4\\
272	1\\
273	1\\
274	3\\
275	0\\
276	0\\
277	1\\
278	3\\
279	2\\
280	2\\
281	0\\
282	0\\
283	3\\
284	1\\
285	3\\
286	5\\
287	4\\
288	3\\
289	1\\
290	1\\
291	0\\
292	0\\
293	2\\
294	2\\
295	3\\
296	2\\
297	2\\
298	2\\
299	2\\
300	2\\
301	3\\
302	3\\
303	3\\
304	3\\
305	2\\
306	2\\
307	5\\
308	3\\
309	2\\
310	2\\
311	3\\
312	3\\
313	4\\
314	2\\
315	0\\
316	0\\
317	1\\
318	3\\
319	6\\
320	4\\
321	4\\
322	4\\
323	3\\
324	0\\
325	1\\
326	1\\
327	2\\
328	2\\
329	1\\
330	1\\
331	3\\
332	1\\
333	1\\
334	3\\
335	3\\
336	3\\
337	3\\
338	3\\
339	4\\
340	3\\
341	2\\
342	2\\
343	3\\
344	1\\
345	1\\
346	1\\
347	5\\
348	5\\
349	5\\
350	4\\
351	1\\
352	1\\
353	3\\
354	3\\
355	4\\
356	2\\
357	2\\
358	2\\
359	3\\
360	2\\
361	3\\
362	2\\
363	2\\
364	2\\
365	0\\
366	0\\
367	5\\
368	3\\
369	3\\
370	3\\
371	2\\
372	2\\
373	3\\
374	4\\
375	4\\
376	4\\
377	3\\
378	3\\
379	3\\
380	1\\
381	1\\
382	1\\
383	6\\
384	1\\
385	2\\
386	2\\
387	2\\
388	2\\
389	2\\
390	3\\
391	4\\
392	2\\
393	2\\
394	4\\
395	3\\
396	3\\
397	4\\
398	4\\
399	2\\
400	2\\
401	1\\
402	2\\
403	5\\
404	3\\
405	4\\
406	5\\
407	3\\
408	3\\
409	2\\
410	2\\
411	1\\
412	3\\
413	4\\
414	4\\
415	5\\
416	3\\
417	4\\
418	4\\
419	1\\
420	1\\
421	2\\
422	2\\
423	2\\
424	2\\
425	2\\
426	0\\
427	3\\
428	4\\
429	4\\
430	4\\
431	4\\
432	4\\
433	3\\
434	3\\
435	1\\
436	1\\
437	3\\
438	3\\
439	8\\
440	1\\
441	1\\
442	1\\
443	2\\
444	0\\
445	1\\
446	1\\
447	4\\
448	4\\
449	1\\
450	1\\
451	3\\
452	1\\
453	1\\
454	5\\
455	2\\
456	2\\
457	3\\
458	4\\
459	4\\
460	4\\
461	2\\
462	2\\
463	5\\
464	1\\
465	1\\
466	1\\
467	5\\
468	5\\
469	3\\
470	3\\
471	1\\
472	0\\
473	2\\
474	3\\
475	5\\
476	5\\
477	5\\
478	5\\
479	5\\
480	4\\
481	5\\
482	3\\
483	1\\
484	1\\
485	2\\
486	2\\
487	5\\
488	3\\
489	5\\
490	5\\
491	3\\
492	3\\
493	3\\
494	4\\
495	1\\
496	1\\
497	1\\
498	1\\
499	4\\
500	2\\
501	2\\
502	2\\
503	5\\
504	3\\
505	3\\
506	2\\
507	2\\
508	2\\
509	3\\
510	3\\
511	4\\
512	0\\
513	0\\
514	2\\
515	3\\
516	3\\
517	4\\
518	4\\
519	4\\
520	4\\
521	3\\
522	3\\
523	6\\
524	3\\
525	3\\
526	3\\
527	6\\
528	5\\
529	4\\
530	2\\
531	2\\
532	1\\
533	1\\
534	1\\
535	4\\
536	2\\
537	2\\
538	4\\
539	1\\
540	1\\
541	2\\
542	2\\
543	2\\
544	2\\
545	1\\
546	1\\
547	4\\
548	4\\
549	4\\
550	4\\
551	2\\
552	0\\
553	1\\
554	4\\
555	2\\
556	2\\
557	6\\
558	7\\
559	5\\
560	1\\
561	1\\
562	1\\
563	3\\
564	1\\
565	2\\
566	5\\
567	3\\
568	3\\
569	4\\
570	4\\
571	6\\
572	4\\
573	5\\
574	7\\
575	2\\
576	2\\
577	2\\
578	2\\
579	2\\
580	4\\
581	2\\
582	2\\
583	5\\
584	2\\
585	2\\
586	2\\
587	2\\
588	3\\
589	4\\
590	4\\
591	5\\
592	5\\
593	5\\
594	2\\
595	3\\
596	1\\
597	3\\
598	3\\
599	6\\
600	6\\
601	4\\
602	2\\
603	2\\
604	1\\
605	1\\
606	1\\
607	7\\
608	3\\
609	4\\
610	6\\
611	2\\
612	2\\
613	3\\
614	4\\
615	3\\
616	3\\
617	3\\
618	3\\
619	4\\
620	2\\
621	2\\
622	4\\
623	6\\
624	3\\
625	4\\
626	4\\
627	4\\
628	4\\
629	2\\
630	2\\
631	2\\
632	0\\
633	0\\
634	2\\
635	3\\
636	2\\
637	3\\
638	5\\
639	3\\
640	3\\
641	3\\
642	3\\
643	7\\
644	3\\
645	3\\
646	3\\
647	8\\
648	8\\
649	3\\
650	1\\
651	0\\
652	0\\
653	1\\
654	1\\
655	5\\
656	3\\
657	4\\
658	4\\
659	2\\
660	2\\
661	3\\
662	2\\
663	4\\
664	4\\
665	1\\
666	2\\
667	5\\
668	3\\
669	3\\
670	3\\
671	4\\
672	4\\
673	5\\
674	4\\
675	0\\
676	0\\
677	2\\
678	2\\
679	6\\
680	4\\
681	4\\
682	4\\
683	4\\
684	2\\
685	4\\
686	4\\
687	3\\
688	2\\
689	4\\
690	4\\
691	6\\
692	4\\
693	4\\
694	6\\
695	2\\
696	3\\
697	4\\
698	4\\
699	8\\
700	8\\
701	4\\
702	3\\
703	6\\
704	2\\
705	2\\
706	2\\
707	2\\
708	2\\
709	3\\
710	3\\
711	2\\
712	2\\
713	5\\
714	4\\
715	5\\
716	3\\
717	3\\
718	3\\
719	4\\
720	2\\
721	2\\
722	2\\
723	2\\
724	2\\
725	3\\
726	3\\
727	7\\
728	4\\
729	2\\
730	3\\
731	3\\
732	3\\
733	4\\
734	6\\
735	3\\
736	3\\
737	4\\
738	4\\
739	3\\
740	2\\
741	3\\
742	3\\
743	5\\
744	3\\
745	4\\
746	4\\
747	5\\
748	7\\
749	4\\
750	4\\
751	5\\
752	3\\
753	4\\
754	6\\
755	3\\
756	3\\
757	4\\
758	4\\
759	2\\
760	2\\
761	2\\
762	0\\
763	3\\
764	3\\
765	3\\
766	4\\
767	6\\
768	6\\
769	2\\
770	2\\
771	1\\
772	1\\
773	4\\
774	5\\
775	6\\
776	2\\
777	2\\
778	2\\
779	6\\
780	4\\
781	5\\
782	5\\
783	2\\
784	2\\
785	0\\
786	0\\
787	3\\
788	1\\
789	3\\
790	5\\
791	2\\
792	2\\
793	3\\
794	5\\
795	3\\
796	3\\
797	4\\
798	4\\
799	6\\
800	2\\
801	2\\
802	2\\
803	5\\
804	3\\
805	4\\
806	4\\
807	4\\
808	4\\
809	5\\
810	5\\
811	4\\
812	2\\
813	3\\
814	5\\
815	5\\
816	5\\
817	6\\
818	3\\
819	2\\
820	2\\
821	1\\
822	1\\
823	6\\
824	5\\
825	5\\
826	5\\
827	5\\
828	5\\
829	3\\
830	3\\
831	3\\
832	2\\
833	2\\
834	2\\
835	6\\
836	6\\
837	6\\
838	6\\
839	2\\
840	1\\
841	2\\
842	2\\
843	2\\
844	1\\
845	4\\
846	6\\
847	8\\
848	6\\
849	3\\
850	4\\
851	4\\
852	4\\
853	5\\
854	6\\
855	2\\
856	2\\
857	5\\
858	5\\
859	6\\
860	3\\
861	3\\
862	3\\
863	6\\
864	4\\
865	5\\
866	4\\
867	6\\
868	6\\
869	2\\
870	2\\
871	6\\
872	2\\
873	3\\
874	5\\
875	2\\
876	2\\
877	3\\
878	3\\
879	7\\
880	7\\
881	3\\
882	3\\
883	5\\
884	1\\
885	1\\
886	1\\
887	6\\
888	4\\
889	5\\
890	2\\
891	1\\
892	3\\
893	5\\
894	5\\
895	8\\
896	6\\
897	6\\
898	6\\
899	5\\
900	4\\
901	3\\
902	2\\
903	1\\
904	1\\
905	0\\
906	0\\
907	3\\
908	3\\
909	7\\
910	7\\
911	4\\
912	2\\
913	3\\
914	3\\
915	1\\
916	3\\
917	5\\
918	5\\
919	5\\
920	3\\
921	3\\
922	5\\
923	4\\
924	2\\
925	3\\
926	5\\
927	6\\
928	6\\
929	4\\
930	2\\
931	4\\
932	2\\
933	2\\
934	4\\
935	5\\
936	5\\
937	4\\
938	4\\
939	3\\
940	3\\
941	3\\
942	3\\
943	8\\
944	1\\
945	2\\
946	2\\
947	4\\
948	4\\
949	4\\
950	4\\
951	5\\
952	5\\
953	2\\
954	2\\
955	3\\
956	3\\
957	3\\
958	5\\
959	6\\
960	5\\
961	5\\
962	3\\
963	3\\
964	3\\
965	2\\
966	2\\
967	6\\
968	6\\
969	6\\
970	6\\
971	3\\
972	3\\
973	4\\
974	5\\
975	1\\
976	1\\
977	3\\
978	5\\
979	6\\
980	4\\
981	4\\
982	4\\
983	7\\
984	5\\
985	8\\
986	6\\
987	6\\
988	6\\
989	4\\
990	4\\
991	4\\
992	2\\
993	1\\
994	3\\
995	2\\
996	2\\
997	3\\
998	5\\
999	6\\
1000	3\\
};

\end{axis}
\end{tikzpicture}%
	\caption{\footnotesize Plot of the number of sub-optimal solutions of the provided estimation to the optimization problem \eqref{opt} as a function of $N$. Note that the number of sub-optimal solutions for fixed $N$ and variable $\ell$ is upper bounded by $N$.}
	\label{plot:Num_SubOpt_over_N}	
\end{figure}
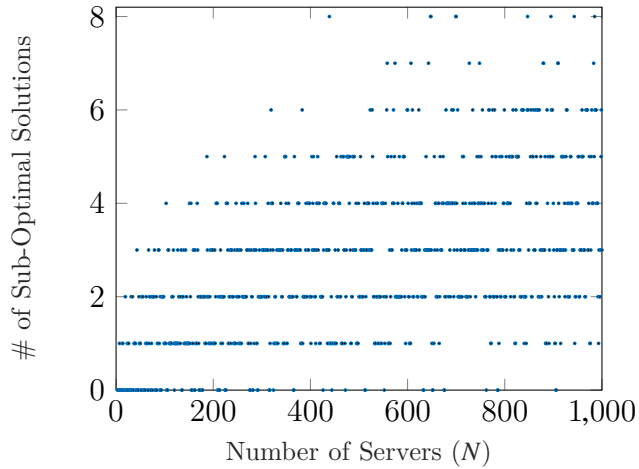

\begin{remark}[Additive gap]
	To evaluate the quality of our proposed analytical solution, we evaluate the maximum additive gap $\max_{\ell\in[N]}|R^{\star}_{\ell}-\hat{R}_{\ell}|$ (see Fig. \ref{plot:Error_over_N}). Hereby, $R^{\star}_{\ell}$ denotes the optimal rate of the optimization problem \eqref{opt} and $\hat{R}_{\ell}$ our proposed estimate. The optimal solution is determined in a brute-force fashion by exhaustive search which is costly in computation. \emph{Numerical} results show that the proposed solution is at most $3\cdot
	10^{-2}$ additively off from the optimal solution.
\end{remark}

\begin{remark} Our proposed solution is frequently the optimal solution. Fig. \ref{plot:Num_SubOpt_over_N} shows the number of sub-optimal solutions of the provided estimation to the optimization problem \eqref{opt} for a given $N$ and variable $\ell$. In theory, for a given $N$, the number of sub-optimal solutions is upper bounded by $N$. Interestingly, this figure shows that '$\#\text{ sub-optimal solutions}\ll N$'. This suggests that our solution solves \eqref{opt} for almost all $\ell\in[N]$ optimally except of very few cases where an almost negligible additive gap is attained.
\end{remark}


\begin{remark}[Server computational complexity]
	We define the \emph{per-server} computational complexity as the number of necessary multiply-accumulate (MAC) operations to determine $\bm Z_{i}= \tilde{\bm A_{i}} \tilde{\bm B_{i}}$. Clearly, the MAC complexity under a general $r_{\bm A}$ and $r_{\bm B}$-partition of matrices $\bm A$ and $\bm B$ becomes $\Theta\left(\frac{mnp}{r_{\bm A}r_{\bm B}}\right)$. Recall that the denominator $r_{\bm A}r_{\bm B}<N$ represents the dimension reserved for desired sub-block matrix products. For constant $N$ and $\ell$, our secret sharing scheme achieves (in comparison to CTs scheme) the better alignment efficiency and thus a larger product $r_{\bm A}r_{\bm B}$. This in return, results in an improved per-server complexity when $m,n$ and $p$ remain constant.  
\end{remark}

\begin{remark}[User decoding complexity]
	The decoding at the user can be interpreted as an interpolation of a $Q-1$-degree polynomial for $\frac{mp}{r_{\bm A}r_{\bm B}}$ times. Hereby, the complexity of a $t$-degree polynomial interpolation is $O(t\log^{2}t\log\log t)$ \cite{Kedlaya11}. Thus, the decoding complexity at the user is of order $O(mp\log^{2}\eta\log\log \eta)$ with $\eta=\max\{r_{\bm A},\ell\}\:r_{\bm B}$.    
\end{remark}

\begin{remark}[Recovery threshold]
	The effective number of server observations the user needs to determine the matrix product $\bm A\bm B$ is (after rate maximization) $Q^{\star}$. In the problem \eqref{opt}, it is desirable to choose $r_{\bm A}$ and $r_{\bm B}$ as large as possible without violating the inequality constraint $Q\leq N$. Typically, after rate maximization we obtain highly straggler-dependent solutions for which $Q^{\star}\approx N$.   
\end{remark}

\begin{remark}[Input matrix dimension]
	Recall that the user splits the input matrices $\bm A$ and $\bm B$ into $r_{\bm A}$ sub-matrices $\bm A_i\in\mathbb{F}^{(m/r_{\bm A})\times n}$ and $r_{\bm B}$ sub-matrices $\bm B_j\in\mathbb{F}^{n\times(p/r_{\bm B})}$. According to Theorem \ref{th:res}, we can easily show from \eqref{eq:est_r_b}
	\begin{align*}
	\max\left\{1,-\frac{3}{2}+\sqrt{\frac{N}{\ell}}\:\right\}\leq r_{\bm B}\leq\sqrt{\frac{N}{\ell}}
	\end{align*} and based on that from \eqref{eq:est_r_a}
	\begin{align*}
	\max\Big\{1,\sqrt{N\ell}-\ell-1\Big\}\leq r_{\bm A}\leq 2\sqrt{N\ell}-\ell+2.
	\end{align*} This suggests that for feasibility in the matrix partitioning, $p$ and $m$ shall (at least) scale according to $\Theta(\sqrt{\nicefrac{N}{\ell}})$ and $\Theta(\sqrt{N\ell})$, respectively.  
\end{remark}

\section{Solution of the Matrix Partitioning Problem \eqref{opt_recover_th}}
\label{sec:main_result_recovery} 

\begin{theorem}
	\label{th:res_recover_th}
	The solution $(\mathring{r}_{\bm A},\mathring{r}_{\bm B})$ is a close-to-optimal analytical solution to the optimization problem \eqref{opt_recover_th} for a given parameter $R_{\text{th}}$ which is feasible with respect to the given parameters $N$ und $\ell$. Hereby, 
	\begin{align}\label{eq:est_r_b1}
	\mathring{r}_{\bm B}=\max\Bigg\{1,\:\biggl\lceil\frac{2}{1-R_{\text{th}}}-2\:\biggr\rceil\:\Bigg\}
	\end{align} 
	and $\mathring{r}_{\bm A}$ is the smallest possible integer $r_{\bm A}\geq 1$ that satisfies the inequality
	\begin{align}\label{eq:est_r_a1}
	\frac{r_{\bm A}\mathring{r}_{\bm B}}{\left(r_{\bm A}+\ell\right)\left(\mathring{r}_{\bm B}+1\right)-1}\geq R_{\text{th}}.
	\end{align}  
\end{theorem}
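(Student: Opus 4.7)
The plan is to mirror the inductive/analytic strategy already indicated for Theorem~\ref{th:res}: first relax the integrality of $(r_{\bm A},r_{\bm B})$, identify the (continuous) minimizer of the objective on the boundary of the rate constraint, and then justify that rounding to integers preserves close-to-optimality.

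First I would rewrite the rate constraint in the form of a lower bound on $r_{\bm A}$. Clearing the denominator in $\frac{r_{\bm A}r_{\bm B}}{(r_{\bm A}+\ell)(r_{\bm B}+1)-1}\geq R_{\text{th}}$ and introducing the shorthand $\alpha\triangleq\tfrac{R_{\text{th}}}{1-R_{\text{th}}}$, one obtains, for any $r_{\bm B}>\alpha$,
\begin{align*}
r_{\bm A}\;\geq\;\frac{\alpha\,[\ell r_{\bm B}+\ell-1]}{r_{\bm B}-\alpha}.
\end{align*}
Since the objective $Q(r_{\bm A},r_{\bm B})=(r_{\bm A}+\ell)(r_{\bm B}+1)-1$ is strictly increasing in $r_{\bm A}$ for fixed $r_{\bm B}$, the minimum in $r_{\bm A}$ is attained at this bound. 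For integer $r_{\bm A}$, this means $\mathring{r}_{\bm A}$ is the smallest integer satisfying the rate inequality for the selected $\mathring{r}_{\bm B}$, which is precisely \eqref{eq:est_r_a1}.

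Next I would substitute this boundary value of $r_{\bm A}$ into $Q$ and minimize over $r_{\bm B}$ in the continuous relaxation. A direct computation gives
\begin{align*}
Q+1\;=\;\frac{[\ell(\alpha+1)r_{\bm B}-\alpha]\,(r_{\bm B}+1)}{r_{\bm B}-\alpha}.
\end{align*}
Setting $\tfrac{d}{dr_{\bm B}}Q=0$ reduces to the quadratic
\begin{align*}
\ell(\alpha+1)\,r_{\bm B}^2-2\ell(\alpha+1)\alpha\,r_{\bm B}+\alpha\bigl[1+\alpha-\ell(\alpha+1)\bigr]=0,
\end{align*}
with unique positive root
\begin{align*}
r_{\bm B}^{\circ}\;=\;\alpha+\sqrt{\alpha^{2}-\tfrac{\alpha[1+\alpha-\ell(\alpha+1)]}{\ell(\alpha+1)}}.
\end{align*}
Dropping the $\Theta(1/\ell)$ correction under the square root yields the approximation $r_{\bm B}^{\circ}\approx 2\alpha=\tfrac{2R_{\text{th}}}{1-R_{\text{th}}}=\tfrac{2}{1-R_{\text{th}}}-2$, which after ceiling and enforcing the trivial lower bound $r_{\bm B}\geq 1$ produces \eqref{eq:est_r_b1}. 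The case $\mathring{r}_{\bm B}=1$ must be treated as a degenerate boundary that is only active for small $R_{\text{th}}$ (namely $R_{\text{th}}\leq 1/3$); here monotonicity of $Q$ in $r_{\bm B}$ on $[1,2\alpha]$ forces the optimum onto the boundary $r_{\bm B}=1$.

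Finally I would close the argument by showing closeness to the true optimum. The same inductive device used for Theorem~\ref{th:res} — relating the pair $(r^{\star}_{\ell-1,\bm A},r^{\star}_{\ell-1,\bm B})$ to $(r^{\star}_{\ell,\bm A},r^{\star}_{\ell,\bm B})$ — transfers here, because the minimizer of the continuous problem varies slowly in $\ell$ and in $R_{\text{th}}$, so rounding perturbs the objective by at most $O(1)$ relative to the continuous optimum. The feasibility hypothesis $R_{\text{th}}\leq R^{\star}$ ensures that the chosen $(\mathring{r}_{\bm A},\mathring{r}_{\bm B})$ also respects the $N$-constraint, since an $(r_{\bm A},r_{\bm B})$ pair with $Q\leq N$ achieving $R_{\text{th}}$ exists by assumption and our solution has smaller (or equal) $Q$ by construction. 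The main obstacle I anticipate is a clean bookkeeping of the ceiling-induced error in $\mathring{r}_{\bm B}$ and the resulting integer selection of $\mathring{r}_{\bm A}$, so as to obtain a precise bound on $\breve{Q}-Q(\mathring{r}_{\bm A},\mathring{r}_{\bm B})$ rather than merely an asymptotic statement; a case split on whether $2\alpha$ is nearly an integer should suffice.
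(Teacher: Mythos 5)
Your proposal cannot be compared against the paper's own argument in the usual sense, because the paper does not actually prove Theorem~\ref{th:res_recover_th}: its ``proof'' is the one-line remark that the argument is very similar to that of Theorem~\ref{th:res} and that details are omitted. What you give is therefore a genuinely different (and more self-contained) route: instead of the paper's backward induction over $\ell$ through strongly feasible pairs and the step sizes $\hat d_m$ (the machinery of Lemmas~\ref{claim1}--\ref{claim5}, which would have to be redeveloped with the rate threshold playing the role of the $N$-constraint), you eliminate $r_{\bm A}$ through the active rate constraint, minimize the resulting one-dimensional expression for $Q$ in a continuous relaxation, and round. Your algebra is sound up to and including the quadratic and its positive root, and it produces \eqref{eq:est_r_a1} exactly and \eqref{eq:est_r_b1} after the approximation step; this buys a direct derivation of the stated formulas, whereas the paper's intended inductive approach would only track how the optimizer moves as $\ell$ (or $R_{\text{th}}$) varies.

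Two points in your sketch need repair. First, the term you discard under the square root is not $\Theta(1/\ell)$: since $1+\alpha-\ell(\alpha+1)=(1+\alpha)(1-\ell)$, the radicand equals $\alpha^{2}+\alpha\bigl(1-\tfrac{1}{\ell}\bigr)$, so the dropped correction is of order $\alpha$. The conclusion survives for a different reason, namely $\sqrt{\alpha^{2}+\alpha(1-1/\ell)}<\alpha+\tfrac{1}{2}$, so the exact continuous minimizer satisfies $r_{\bm B}^{\circ}\in[2\alpha,2\alpha+\tfrac{1}{2})$ (with $r_{\bm B}^{\circ}=2\alpha$ exactly when $\ell=1$); replacing it by $2\alpha$ before taking the ceiling perturbs the choice by at most one unit, which is the statement you should make instead. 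Second, the feasibility argument ``our solution has smaller (or equal) $Q$ by construction'' does not follow: your construction is only approximately optimal, so $Q(\mathring r_{\bm A},\mathring r_{\bm B})$ may exceed $\breve Q$, and when $\breve Q$ is close to $N$ the constraint $Q\leq N$ is not automatic from feasibility of $R_{\text{th}}$ alone. That, together with the quantitative bound on $Q(\mathring r_{\bm A},\mathring r_{\bm B})-\breve Q$ which you yourself defer, is precisely the bookkeeping still missing -- and it is also exactly what the paper leaves unproven, since the theorem is only claimed to be ``close-to-optimal'' without any stated gap.
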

\begin{proof}
	The proof is very similar to the proof of Theorem \ref{th:res}. The current version of this paper does not include a detailed proof.
\end{proof}

\section{Concluding Remarks}
\label{sec:conclusion}

In this paper, we studied the two-sided secure matrix multiplication problem, where a user is interested in the matrix product $\bm A\bm B$ of two private matrices $\bm A$ and $\bm B$. The user tries to conceal the private matrices from $N$ servers (where we allow for up to $\ell$ servers to collude), but uses them to compute the matrix product. We propose a partition-based aligned secret sharing scheme. Next, we formulate and solve two optimization problems that determine the optimal matrix partition of input matrices $\bm A$ and $\bm B$ to (i) maximize the communication rate of this scheme and (ii) to maximize the recovery threshold. With respect to objective (i), numerical results show that this scheme significantly outperforms the state-of-the-art scheme of Chang and Tandon presented in \cite{Chang2018OnTC}. In summary, our work shows that appropriate matrix partition is of importance in enabling rate-efficient, straggler-robust and secure two-sided distributed matrix computation. 

\appendices 
\section{Close-to-optimal Solution of Optimization Problem \eqref{opt}}
\label{sec:close_opt_sol}

\textcolor{black}{Next, we propose a close-to-optimal solution to} the optimization problem \eqref{opt}. \textcolor{black}{To establish this solution, we need the following lemmas.}

\begin{lemma}
	\label{claim1}
	For every optimal solution of the optimization problem \eqref{opt}, there is at least one maximizing pair denoted by $(r^{\star}_{\bm A},r^{\star}_{\bm B})$ which satisfies $r^{\star}_{\bm A}\:\textcolor{black}{\geq}\:r^{\star}_{\bm B}$.
\end{lemma}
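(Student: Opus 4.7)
The plan is to use a symmetry-breaking (swap) argument: I will show that for any optimal pair $(r_{\bm A}, r_{\bm B})$ violating the desired inequality $r_{\bm A} \geq r_{\bm B}$, the swapped pair $(r_{\bm B}, r_{\bm A})$ is also feasible and achieves at least as large an objective value, hence is itself an optimal solution of \eqref{opt} satisfying the inequality.

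Assume $(r_{\bm A}, r_{\bm B})$ is optimal with $r_{\bm A} < r_{\bm B}$ (otherwise there is nothing to prove). The key computation, which I would carry out first, compares the two expressions $(r_{\bm A} + \ell)(r_{\bm B} + 1)$ and $(r_{\bm B} + \ell)(r_{\bm A} + 1)$. Expanding both products, their difference simplifies to
\begin{align*}
(r_{\bm A} + \ell)(r_{\bm B} + 1) - (r_{\bm B} + \ell)(r_{\bm A} + 1) = (r_{\bm B} - r_{\bm A})(\ell - 1) \geq 0,
\end{align*}
for the nontrivial case $\ell \geq 1$, with equality iff $\ell = 1$. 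This single inequality is the heart of the proof.

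Given this inequality, I would first verify feasibility: since $(r_{\bm A} + \ell)(r_{\bm B} + 1) - 1 \leq N$ by assumption and the swapped version has a no-larger denominator, $(r_{\bm B} + \ell)(r_{\bm A} + 1) - 1 \leq N$ holds as well, so the swapped pair lies in the feasible region of \eqref{opt}. Second, I would compare the objective values by noting that the rate function can be written as
\begin{align*}
R_{N,\ell}(r_{\bm A}, r_{\bm B}) = \frac{r_{\bm A} r_{\bm B}}{(r_{\bm A}+\ell)(r_{\bm B}+1) - 1},
\end{align*}
so the numerator is symmetric under swapping while the denominator (weakly) decreases. Hence $R_{N,\ell}(r_{\bm B}, r_{\bm A}) \geq R_{N,\ell}(r_{\bm A}, r_{\bm B})$, and since the original pair was optimal, equality must hold, making $(r_{\bm B}, r_{\bm A})$ itself an optimal pair. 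Relabeling $(r^{\star}_{\bm A}, r^{\star}_{\bm B}) = (r_{\bm B}, r_{\bm A})$ yields the desired optimal pair with $r^{\star}_{\bm A} \geq r^{\star}_{\bm B}$.

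The argument is essentially routine once the sign of $(r_{\bm B} - r_{\bm A})(\ell - 1)$ is identified, so I do not expect a main obstacle; the only subtle point is recognizing that the asymmetry between $r_{\bm A}$ (which is coupled with $\ell$) and $r_{\bm B}$ (which is coupled with $1$) favors placing the larger partition index on the $r_{\bm A}$ side whenever $\ell > 1$, while $\ell = 1$ gives a symmetric problem in which both orderings are trivially optimal.
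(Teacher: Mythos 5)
Your proposal is correct and takes essentially the same route as the paper: the identical key identity $(r_{\bm A}+\ell)(r_{\bm B}+1)-(r_{\bm B}+\ell)(r_{\bm A}+1)=(r_{\bm B}-r_{\bm A})(\ell-1)$, followed by the observation that swapping the pair weakly increases the rate. If anything, your write-up is slightly tighter than the paper's contradiction phrasing, since you explicitly verify feasibility of the swapped pair and handle the weak-inequality case (e.g.\ $\ell=1$) by concluding the swapped pair is itself optimal rather than asserting a strict contradiction.
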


\begin{proof}
	\alert{Proof by contradiction. Suppose that the maximizing pair $(r^{\star}_{\bm A},r^{\star}_{\bm B})$ satisfies $r^{\star}_{\bm A}<r^{\star}_{\bm B}$}. The associated number of exploited servers is then given by
	\begin{align*}
	Q^{\star}(r^{\star}_{\bm A},r^{\star}_{\bm B})=\big(r^{\star}_{\bm A}+\ell\big)\big(r^{\star}_{\bm B}+1\big)-1.
	\end{align*}
	\alert{On the other hand, the associated number of exploited servers for the \emph{inverted} pair $(r^{\star}_{\bm B},r^{\star}_{\bm A})$ corresponds to} 
	\begin{align*}
	Q'(r^{\star}_{\bm B},r^{\star}_{\bm A})=\big(r^{\star}_{\bm B}+\ell\big)\big(r^{\star}_{\bm A}+1\big)-1.
	\end{align*}
	Subtracting $Q'$ from $Q^{\star}$ gives
	\begin{align*}
	Q'-Q^{\star}=(r^{\star}_{\bm A}-r^{\star}_{\bm B})(\ell-1)\leq 0.
	\end{align*}
	\alert{When $Q'-Q^{\star}\leq 0$, we have 
		\begin{align}
		\notag
		Q'-Q^{\star}\leq 0\Leftrightarrow &\: Q'\leq Q^{\star}\\
		\label{ineq1}\Leftrightarrow & R(r^{\star}_{\bm B},r^{\star}_{\bm A})=\frac{r^{\star}_{\bm A}r^{\star}_{\bm B}}{Q'}\geq \frac{r^{\star}_{\bm A}r^{\star}_{\bm B}}{Q^{\star}}=R^{\star}(r^{\star}_{\bm A},r^{\star}_{\bm B}). 
		\end{align}\noindent We infer from inequality \eqref{ineq1} that the \emph{inverted} pair $(r^{\star}_{\bm B},r^{\star}_{\bm A})$ attains a higher rate than $(r^{\star}_{\bm A},r^{\star}_{\bm B})$. This is in contradiction with the assumption that $(r^{\star}_{\bm A},r^{\star}_{\bm B})$ is a maximizing pair.} \\
\end{proof}

\begin{lemma}
	\label{claim2}
	\alert{When $\ell_{\max}=\lfloor \frac{N-1}{2}\rfloor$ and $N\geq 3$, $(r^{\star}_{\bm A},r^{\star}_{\bm B})=(1,1)$ is an \emph{unique} maximizing pair of the optimization problem \eqref{opt}.}
\end{lemma}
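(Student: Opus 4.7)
The strategy is to show that when $\ell$ is as large as the problem allows, namely $\ell = \ell_{\max}=\lfloor(N-1)/2\rfloor$, the feasibility constraint becomes so tight that $(1,1)$ is the \emph{only} feasible pair. Uniqueness of the maximizer is then immediate.

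First, I would record what $\ell_{\max}=\lfloor(N-1)/2\rfloor$ buys us: by elementary floor manipulation, $2\ell_{\max}+1 \leq N \leq 2\ell_{\max}+2$. This two-sided bound on $N$ in terms of $\ell$ is the only fact about $\ell_{\max}$ used in the rest of the argument.

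Second, I would verify that $(r_{\bm A}^\star,r_{\bm B}^\star)=(1,1)$ is feasible. Substituting into the constraint gives $(1+\ell)(1+1)-1 = 2\ell+1 \leq N$, which holds by the lower bound on $N$ above. The achieved rate is $R(1,1) = 1/(2\ell+1)$.

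Third, I would rule out every other positive integer pair by contradiction. Take any feasible $(r_{\bm A},r_{\bm B})\in(\mathbb{Z}^+)^2 \setminus \{(1,1)\}$ and split into three exhaustive cases: (a) $r_{\bm B}\geq 2$ and $r_{\bm A}\geq 1$, which forces $(r_{\bm A}+\ell)(r_{\bm B}+1)-1 \geq (1+\ell)\cdot 3 - 1 = 3\ell+2$; combining with $3\ell+2 \leq N \leq 2\ell+2$ yields $\ell\leq 0$, contradicting $\ell\geq 1$ (which holds since $N\geq 3$). (b) $r_{\bm B}=1$ and $r_{\bm A}\geq 2$, which gives $(r_{\bm A}+\ell)\cdot 2 - 1 \geq (2+\ell)\cdot 2 - 1 = 2\ell+3 > 2\ell+2 \geq N$, again violating feasibility. (c) Both $r_{\bm A},r_{\bm B}\geq 2$ is subsumed by case (a). Hence $(1,1)$ is the unique feasible pair, and therefore the unique maximizer.

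The argument requires no use of Lemma \ref{claim1} (that result becomes useful only in the less extreme regime $\ell<\ell_{\max}$); here the bottleneck is purely feasibility. The only subtlety—and the step worth being careful with—is the translation $\ell=\lfloor(N-1)/2\rfloor \Longleftrightarrow 2\ell+1\leq N\leq 2\ell+2$, which has to cover the parities of $N$ uniformly so that all three cases above close cleanly.
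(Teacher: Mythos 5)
Your proof is correct and follows essentially the same route as the paper: both arguments show that at $\ell=\ell_{\max}=\lfloor(N-1)/2\rfloor$ the constraint $(r_{\bm A}+\ell)(r_{\bm B}+1)-1\leq N$ admits $(1,1)$ as the \emph{only} feasible pair, so uniqueness of the maximizer is immediate and no rate comparison (or Lemma~\ref{claim1}) is needed. The only difference is bookkeeping: the paper writes $(r_{\bm A},r_{\bm B})=(1+a,1+b)$ and expands $Q_{N,\ell_{\max}}$ by the parity of $N$ to force $(a,b)=(0,0)$, whereas you encode the parity once via $2\ell_{\max}+1\leq N\leq 2\ell_{\max}+2$ and close the argument with a short case split — an equivalent, slightly more explicit presentation.
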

\begin{proof} Define
	\begin{align*}
	\begin{split}
	\ell_{\max}=\biggl\lfloor \frac{N-1}{2}\biggr\rfloor=
	\begin{cases}
	\frac{N}{2}-1 \qquad & N\text{ even} \\
	\frac{N-1}{2} \qquad & N\text{ odd}
	\end{cases}.
	\end{split}
	\end{align*}
	\alert{We set $(r_{\bm A},r_{\bm B})=(1+a,1+b),\text{ where } a,b\in \mathbb{Z}^{+}\cup\{0\}$, such that the number of exploited servers equals} 
	\begin{align*}
	\begin{split}
	Q_{N,\ell_{\max}}=&\big(r_{\bm A}+\ell_{\max}\big)\left(r_{\bm B}+1\right)-1\\=&\big(a+\ell_{\max}+1\big)\left(b+2\right)-1\\
	=&\begin{cases}
	N-1+2a+b+ab+\ell_{\max}b \qquad & N\text{ even} \\
	N+2a+b+ab+\ell_{\max}b \qquad & N\text{ odd}
	\end{cases}.
	\end{split}
	\end{align*}
	\alert{The only feasible pair $(a,b)$ satisfying the inequality constraint of the optimization problem \eqref{opt} is $(a,b)=(0,0)$. Therefore the only maximizing pair of the optimization problem \eqref{opt} is $(r^{\star}_{\bm A},r^{\star}_{\bm B})=(1,1)$.}
\end{proof}
\begin{definition}
	\label{feasible}
	\alert{For a given $\ell$ and $N$, $(r_{\bm A}, r_{\bm B})$ is a \emph{strongly feasible} pair of the optimization problem \eqref{opt} if and only if 
		\begin{enumerate}[label=($\roman*$)]
			\item it satisfies the inequality constraint $Q_{N,\ell}(r_{\bm A}, r_{\bm B})\leq N$,
			\item and there exists no \emph{feasible} pair $(r'_{\bm A}, r_{\bm B})$ or $(r_{\bm A}, r'_{\bm B})$ with $r'_{\bm A}\geq r_{\bm A}$ \textcolor{black}{or} $r'_{\bm B}\geq r_{\bm B}$.
		\end{enumerate}} 
\end{definition}
	\begin{lemma}
		\label{feasibility}
		\alert{Every maximizing pair $(r^{\star}_{\bm A}, r^{\star}_{\bm B})$ of the optimization problem \ref{opt} satisfies the strong feasibility condition.} 
	\end{lemma}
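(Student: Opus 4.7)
The plan is to proceed by contradiction. Suppose $(r^{\star}_{\bm A}, r^{\star}_{\bm B})$ is a maximizing pair of \eqref{opt}. Then condition $(i)$ of strong feasibility (Definition \ref{feasible}) is automatically satisfied, as any maximizing pair must lie in the feasible set of \eqref{opt}. Thus, only condition $(ii)$ could fail. Assuming it does, there exists a feasible pair of the form either $(r'_{\bm A}, r^{\star}_{\bm B})$ with $r'_{\bm A} > r^{\star}_{\bm A}$, or $(r^{\star}_{\bm A}, r'_{\bm B})$ with $r'_{\bm B} > r^{\star}_{\bm B}$. I intend to show that in either case, the rate function strictly increases, which contradicts the optimality of $(r^{\star}_{\bm A}, r^{\star}_{\bm B})$.

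The key technical step is to establish that $R_{N,\ell}(r_{\bm A}, r_{\bm B}) = \tfrac{r_{\bm A}r_{\bm B}}{(r_{\bm A}+\ell)(r_{\bm B}+1)-1}$ is \emph{strictly monotonically increasing} in each argument when the other is held fixed. Computing the partial derivative with respect to $r_{\bm A}$ yields a numerator proportional to $r_{\bm B}\bigl[\ell(r_{\bm B}+1)-1\bigr]$, which is strictly positive for $\ell\geq 1$ and $r_{\bm B}\geq 1$. Likewise, the partial derivative with respect to $r_{\bm B}$ has a numerator proportional to $r_{\bm A}(r_{\bm A}+\ell-1)$, which is strictly positive under the same parameter regime. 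Hence, replacing $r^{\star}_{\bm A}$ by $r'_{\bm A}>r^{\star}_{\bm A}$ (respectively $r^{\star}_{\bm B}$ by $r'_{\bm B}>r^{\star}_{\bm B}$) produces a feasible pair with strictly larger objective value, contradicting the assumed maximality.

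The main obstacle—such as it is—lies in handling the boundary cases $\ell=0$ or $r_{\bm B}=1$, where the strict-monotonicity argument should be double-checked (in particular, the sign of $\ell(r_{\bm B}+1)-1$ must remain positive). Since the problem setup assumes at least one colluding server, these degenerate cases are easily dispatched. A subtle but routine point is that the definition of strong feasibility as stated uses $r'_{\bm A}\geq r_{\bm A}$ and $r'_{\bm B}\geq r_{\bm B}$; I would interpret these as strict inequalities (consistent with the intent that $(r_{\bm A},r_{\bm B})$ itself cannot violate its own maximality), otherwise the condition is vacuously unsatisfiable for the pair under consideration.
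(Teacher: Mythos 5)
Your proposal is correct and follows essentially the same route as the paper: assume the maximizing pair is not strongly feasible, observe that one coordinate could then be strictly increased while staying feasible, and invoke strict monotonicity of $R_{N,\ell}$ in each argument (for $\ell\geq 1$) to contradict optimality. The only cosmetic difference is that you verify monotonicity via the partial derivatives (with numerators $r_{\bm B}[\ell(r_{\bm B}+1)-1]$ and $r_{\bm A}(r_{\bm A}+\ell-1)$, both computed correctly), whereas the paper reads it off the rewriting $R=1/\bigl(1+\tfrac{\ell}{r_{\bm A}}+\tfrac{1}{r_{\bm B}}+\tfrac{\ell-1}{r_{\bm A}r_{\bm B}}\bigr)$; your reading of the definition's inequalities as strict matches the paper's intent.
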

	\begin{proof}
		The rate for any pair $(r_{\bm A}, r_{\bm B})\in\mathbb{Z}_{2}^{+}$ is given by
		\begin{align}
		R(r_{\bm A}, r_{\bm B})=\frac{r_{\bm A}r_{\bm B}}{r_{\bm A}r_{\bm B}+\ell r_{\bm B}+r_{\bm A}+\ell-1}= \frac{1}{1+\frac{\ell}{r_{\bm A}}+\frac{1}{r_{\bm B}}+\frac{\ell-1}{r_{\bm A}r_{\bm B}}}.
		\label{rateEq}
		\end{align}
		\alert{Suppose by contradiction that the \textcolor{black}{maximizing} pair $(r^{\star}_{\bm A}, r^{\star}_{\bm B})$ is \emph{not} strongly feasible, i.e., it does not satisfy condition $(ii)$ of Definition \ref{feasibility}. Thus, $r_{\bm A}$ or $r_{\bm B}$ can be increased to values above $r^{\star}_{\bm A}$ or $r^{\star}_{\bm B}$ without violating the inequality constraint of the optimization problem. 
			An increase of $r_{\bm A}$ or $r_{\bm B}$ leads to an increase in the rate (cf. Eq. \ref{rateEq}). This contradicts that $(r^{\star}_{\bm A}, r^{\star}_{\bm B})$ is a maximizing pair of the optimization problem \ref{opt}.}
	\end{proof}
	
	\begin{lemma}
		\label{claim3}
		\alert{Let $(r_{\bm A}, r_{\bm B})$ be a strongly feasible pair. When $\ell$ decreases by one $(\ell\leftarrow\ell-1)$ and we simultaneously increase $r_{\bm A}$ by one $(r_{\bm A}\leftarrow r_{\bm A}+1)$ while keeping $r_{\bm B}$ constant $(i)$ has no effect on the number of exploited servers and keeps it at $\tilde{Q}\triangleq Q_{\ell}(r_{\bm A}, r_{\bm B})$, $(ii)$ generates a new strongly feasible pair $(r_{\bm A}+1, r_{\bm B})$ at $\ell-1$ and $(iii)$ increases the rate additively by $\frac{r_{\bm B}}{\tilde{Q}}$.}
	\end{lemma}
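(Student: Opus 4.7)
The plan is to handle the three parts of the lemma in order, with parts (i) and (iii) reducing to direct algebraic verification and part (ii) requiring a contradiction argument that leverages the strong feasibility of the original pair.

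For part (i), I would simply expand $Q_{\ell-1}(r_{\bm A}+1,r_{\bm B}) = ((r_{\bm A}+1)+(\ell-1))(r_{\bm B}+1)-1 = (r_{\bm A}+\ell)(r_{\bm B}+1)-1 = \tilde{Q}$, so the number of exploited servers is invariant under the simultaneous shift $\ell\leftarrow \ell-1,\ r_{\bm A}\leftarrow r_{\bm A}+1$. This invariance is actually the structural fact that drives everything else in the proof.

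For part (ii), I would argue by contradiction. First, feasibility of the new pair with respect to the inequality $Q\leq N$ is immediate from part (i). Then I need to rule out further increases in either coordinate. If one could increase $r_{\bm A}$ further to $r_{\bm A}+2$ at level $\ell-1$, the same algebraic trick as in (i) shows $Q_{\ell-1}(r_{\bm A}+2,r_{\bm B}) = Q_{\ell}(r_{\bm A}+1,r_{\bm B})$, and the latter exceeds $N$ by the strong feasibility of $(r_{\bm A},r_{\bm B})$ at $\ell$ (which forbids such an increase in $r_{\bm A}$). An analogous identity $Q_{\ell-1}(r_{\bm A}+1,r_{\bm B}+1) = Q_{\ell}(r_{\bm A},r_{\bm B}+1)$ rules out increasing $r_{\bm B}$. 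Together with Definition~\ref{feasible}, this establishes strong feasibility of the new pair at $\ell-1$.

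For part (iii), using part (i) that the denominator is unchanged, I would compute
\begin{align*}
R_{\ell-1}(r_{\bm A}+1,r_{\bm B}) - R_{\ell}(r_{\bm A},r_{\bm B}) = \frac{(r_{\bm A}+1)r_{\bm B}}{\tilde{Q}} - \frac{r_{\bm A}r_{\bm B}}{\tilde{Q}} = \frac{r_{\bm B}}{\tilde{Q}},
\end{align*}
yielding the claimed additive rate increase.

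The main (and only mild) obstacle is verifying the two identities in part (ii) cleanly, namely that shifting $(\ell,r_{\bm A})\mapsto(\ell-1,r_{\bm A}+1)$ preserves $Q$ not only at the strongly feasible pair but also at its neighbors $(r_{\bm A}+1,r_{\bm B})$ and $(r_{\bm A},r_{\bm B}+1)$. Once this shift-invariance of $Q$ is stated carefully, part (ii) follows immediately from the forbidden-neighbor property implicit in strong feasibility. No further case analysis or inductive machinery is needed.
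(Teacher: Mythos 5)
Your proof is correct and follows essentially the same route as the paper: the whole argument rests on the shift-invariance $Q_{\ell-1}(r_{\bm A}+1,r_{\bm B})=Q_{\ell}(r_{\bm A},r_{\bm B})$, from which the unchanged denominator gives the additive rate gain $\frac{r_{\bm B}}{\tilde{Q}}$ exactly as in the paper's computation. The only difference is that the paper dismisses part $(ii)$ with ``readily follows,'' whereas you spell it out via the neighbor identities $Q_{\ell-1}(r_{\bm A}+2,r_{\bm B})=Q_{\ell}(r_{\bm A}+1,r_{\bm B})>N$ and $Q_{\ell-1}(r_{\bm A}+1,r_{\bm B}+1)=Q_{\ell}(r_{\bm A},r_{\bm B}+1)>N$, which is a correct and welcome elaboration rather than a different approach.
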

	\begin{proof}
		Consider the pairs $\left(r_{1,\bm A},r_{1, \bm B}\right)$ and $\left(r_{2,\bm A},r_{2, \bm B}\right)$, where $r_{2,\bm A}=r_{1,\bm A}+1$ and \textcolor{black}{$r_{2,\bm B}=r_{1, \bm B}$}. The number of exploited servers for the pair $\left(r_{2,\bm A},r_{2, \bm B}\right)$ with \alert{$\ell_{2}=\ell_{1}-1$} colluding servers is given by
		\begin{align}
		\label{eq20}
		\tilde{Q}\triangleq Q_{\ell_2}(r_{2,\bm A},r_{2, \bm B})=\left(r_{1, \bm A}+\ell_1+1-1\right)\left(r_{1,\bm B}+1\right)-1=Q_{\ell_1}(r_{1, \bm A},r_{1, \bm B}).
		\end{align}
		\alert{From $Q_{\ell_2}=Q_{\ell_1}$, $(i)$ and $(ii)$ of Lemma \ref{claim3} readily follow. The rate associated with the pair $\left(r_{2,\bm A},r_{2, \bm B}\right)$ then becomes 
			\begin{align*}
			R_{\ell_2}(r_{2,\bm A},r_{2, \bm B}) =&
			\frac{r_{2, \bm A}r_{2,\bm B}}{Q_{\ell_2}(r_{2,\bm A},r_{2, \bm B})}=
			\frac{(r_{1, \bm A}+1)r_{1,\bm B}}{Q_{\ell_1}(r_{1, \bm A},r_{1, \bm B})}\\
			=&\frac{r_{1, \bm A}r_{1,\bm B}}{\tilde{Q}}+\frac{r_{1,\bm B}}{\tilde{Q}}.
			\end{align*}} 
	\end{proof}
	\begin{lemma}
		\label{claim4}
		\alert{Suppose that $(r_{1,\bm A}, r_{1,\bm B})$ and $(r_{2,\bm A}, r_{2,\bm B})$ are two strongly feasible pairs, where $r_{2,\bm B}\geq r_{1,\bm B}$ \textcolor{black}{and $r_{1,\bm A}+1\geq r_{1,\bm B}$}. Decreasing $\ell$ by one $(\ell\leftarrow\ell-1)$ and simultaneously increasing $r_{1,\bm A}$ $(r_{1,\bm A}\leftarrow r_{1,\bm A}+1)$ and $r_{2,\bm A}$ $(r_{2,\bm A}\leftarrow r_{2,\bm A}+1)$ by one while not changing $r_{1,\bm B}$ and $r_{2,\bm B}$ results in an increase of the rate for both of the pairs $(r_{1,\bm A}, r_{1,\bm B})$ and $(r_{2,\bm A}, r_{2,\bm B})$. The additive increase in the rate for the pair $(r_{2,\bm A}, r_{2,\bm B})$ is larger than for the pair $(r_{1,\bm A}, r_{1,\bm B})$.} 
	\end{lemma}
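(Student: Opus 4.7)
The plan is to split the lemma into its two assertions and handle each with a direct application of Lemma~\ref{claim3} followed by a monotonicity analysis of the increment formula it provides.

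First, applying Lemma~\ref{claim3} to each of the two strongly feasible pairs separately shows that after the joint update $(\ell, r_{i,\bm A}) \leftarrow (\ell-1, r_{i,\bm A}+1)$, the number of exploited servers is unchanged at $\tilde{Q}_i \triangleq (r_{i,\bm A}+\ell)(r_{i,\bm B}+1)-1$, the new pair remains strongly feasible at $\ell-1$, and the rate grows additively by $\Delta_i \triangleq r_{i,\bm B}/\tilde{Q}_i > 0$. Since $\Delta_1,\Delta_2>0$, both rates strictly increase, which settles the first assertion. The side condition $r_{1,\bm A}+1\geq r_{1,\bm B}$ is invoked here to ensure that the updated pair $(r_{1,\bm A}+1,r_{1,\bm B})$ remains compatible with the structural condition $r_{\bm A}\geq r_{\bm B}$ from Lemma~\ref{claim1}, and hence may itself serve as a maximizing candidate in the induction that underlies Theorem~\ref{th:res}.

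Second, to compare the two increments I would show that the map
\begin{equation*}
g(a,b)\triangleq \frac{b}{(a+\ell)(b+1)-1}
\end{equation*}
is \emph{strictly decreasing} in $a$ and \emph{strictly increasing} in $b$ on the relevant integer lattice. The first monotonicity is immediate from $\partial g/\partial a = -b(b+1)/[(a+\ell)(b+1)-1]^2 <0$. The second follows from $\partial g/\partial b = (a+\ell-1)/[(a+\ell)(b+1)-1]^2 >0$, using $a+\ell\geq 2$. Consequently, to conclude $\Delta_2>\Delta_1$ it suffices to prove the coordinate-wise comparison
\begin{equation*}
r_{2,\bm A}\leq r_{1,\bm A} \quad\text{and}\quad r_{2,\bm B}\geq r_{1,\bm B},
\end{equation*}
with at least one inequality strict; the hypothesis already gives the second, and distinctness of the two pairs (the only interesting case) forces $r_{2,\bm B}>r_{1,\bm B}$, since strong feasibility uniquely pins down $r_{\bm A}$ once $r_{\bm B}$ is fixed.

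The remaining, and main, step is therefore to establish $r_{2,\bm A}\leq r_{1,\bm A}$ from strong feasibility. For this I would use condition~$(ii)$ of Definition~\ref{feasible} applied to each pair: it implies that $r_{i,\bm A}+\ell$ equals the integer value $\lfloor (N+1)/(r_{i,\bm B}+1)\rfloor$, i.e., $r_{i,\bm A}$ is the largest integer with $(r_{i,\bm A}+\ell)(r_{i,\bm B}+1)\leq N+1$. Since $r_{2,\bm B}>r_{1,\bm B}$ makes $(N+1)/(r_{2,\bm B}+1)<(N+1)/(r_{1,\bm B}+1)$, taking floors yields $r_{2,\bm A}\leq r_{1,\bm A}$ as required. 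Combining this with $r_{2,\bm B}>r_{1,\bm B}$ and the strict monotonicity of $g$ gives $\Delta_2 = g(r_{2,\bm A},r_{2,\bm B}) > g(r_{1,\bm A},r_{1,\bm B}) = \Delta_1$, completing the proof.

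The main obstacle I anticipate is the careful handling of the strict-versus-weak inequalities and the integer-lattice arithmetic in the last step: in particular, ruling out the degenerate sub-case $r_{2,\bm B}=r_{1,\bm B}$ (which collapses to $\Delta_1=\Delta_2$) and verifying that the floor comparison translates cleanly into $r_{2,\bm A}\leq r_{1,\bm A}$ without exceptional boundary behaviour.
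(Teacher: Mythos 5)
Your argument is correct, but it follows a genuinely different route from the paper's. The paper never compares the two pairs coordinatewise: it first uses the side hypothesis $r_{1,\bm A}+1\geq r_{1,\bm B}$ to get $r_{1,\bm B}^{2}\leq Q_{\ell}(r_{1,\bm A},r_{1,\bm B})$, hence $\frac{r_{1,\bm B}}{Q_{\ell}(r_{1,\bm A},r_{1,\bm B})}\leq\frac{r_{1,\bm B}+1}{r_{1,\bm B}+Q_{\ell}(r_{1,\bm A},r_{1,\bm B})}$ (inequality \eqref{ineq2}), then uses strong feasibility of the first pair only to lower-bound $Q_{\ell}(r_{1,\bm A},r_{1,\bm B})\geq N-r_{1,\bm B}$ and plain feasibility of the second pair to get $Q_{\ell}(r_{2,\bm A},r_{2,\bm B})\leq N$, i.e., \eqref{ineq3}, and chains these with $r_{1,\bm B}\leq r_{2,\bm B}$ to obtain the (weak) comparison of the two increments from Lemma \ref{claim3}. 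You instead extract from condition $(ii)$ of Definition \ref{feasible} the exact characterization $r_{i,\bm A}+\ell=\lfloor (N+1)/(r_{i,\bm B}+1)\rfloor$, deduce the coordinatewise relation $r_{2,\bm A}\leq r_{1,\bm A}$, and conclude by monotonicity of $g(a,b)=\frac{b}{(a+\ell)(b+1)-1}$ in each argument; both your floor argument and the two derivative computations check out. Your route buys two things: it does not actually use the hypothesis $r_{1,\bm A}+1\geq r_{1,\bm B}$ at all (so it shows the comparison holds without it), and for distinct pairs it yields the strict inequality claimed in the lemma statement, whereas the paper's chain only establishes the weak inequality (which is all that is needed in Lemma \ref{claim5}). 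The price is that you invoke strong feasibility of \emph{both} pairs (the paper needs it only for the first), which the hypotheses grant anyway. One small inaccuracy: your side remark that $r_{1,\bm A}+1\geq r_{1,\bm B}$ serves to keep the updated pair compatible with Lemma \ref{claim1} misreads its role in the paper, where it is precisely what drives \eqref{ineq2}; since your argument never relies on it, this is harmless.
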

	\begin{proof}
		\textcolor{black}{We have $r_{1,\bm A}+1\geq r_{1,\bm B}$ so that}
		\begin{align}
		\notag
		&\textcolor{black}{r_{1,\bm B}^2\leq (r_{1,\bm A}+1) r_{1,\bm B} \leq (r_{1,\bm A}+\ell)r_{1,\bm B} \leq (r_{1,\bm A}+\ell) (r_{1,\bm B}+1)-1 = Q_{\ell}(r_{1,\bm A},r_{1,\bm B})} \\
		\label{ineq2}
		\Leftrightarrow \qquad & r_{1,\bm B}^{2}+r_{1,\bm B} Q_{\ell}(r_{1,\bm A},r_{1,\bm B}) \leq r_{1,\bm B} Q_{\ell}(r_{1,\bm A},r_{1,\bm B})+ Q_{\ell}(r_{1,\bm A},r_{1,\bm B})\nonumber\\\Leftrightarrow\qquad &\frac{r_{1,\bm B}}{Q_{\ell}(r_{1,\bm A},r_{1,\bm B})}\leq \frac{r_{1,\bm B}+1}{r_{1,\bm B}+Q_{\ell}(r_{1,\bm A},r_{1,\bm B})}
		\end{align} follows. 
		Since $(r_{1,\bm A}, r_{1,\bm B})$ is a strongly feasible pair (cf. Definition \ref{feasible}), neither $r_{1,\bm A}$ nor $r_{1,\bm B}$ can increase while the other element of the pair $(r_{1,\bm A}, r_{1,\bm B})$ remains constant. Recall that
		\begin{align}
		\label{eq22}
		Q_{\ell}(r_{1,\bm A},r_{1,\bm B})=\left(r_{1, \bm A}+\ell\right)\left(r_{1,\bm B}+1\right)\alert{-1}\leq N.
		\end{align}
		\alert{Incrementing $r_{1, \bm A}$ by one increases $Q_{\ell}$ by $r_{1,\bm B}+1$.} Similarly, increasing \alert{$r_{1, \bm B}$} by one  enlarges $Q_{\ell}$ by $r_{1,\bm A}+\ell$. Moreover, $r_{1, \bm A}\geq r_{1, \bm B}$ and $\ell\geq 1$ implies  
		\begin{align*}
		r_{1, \bm A}+\ell\geq r_{1, \bm B}+1.
		\end{align*}
		\alert{Therefore, strong feasibility along with above observation suggests that $Q_{\ell}$ is at least $N-r_{1, \bm B}$.} We remind the reader that as long as $Q_{\ell}< N-r_{1, \bm B}$, the strong feasibility assumption is violated.  
		As a result, the number of exploited servers for the strongly feasible pair $(r_{1,\bm A}, r_{1,\bm B})$ is lower bounded according to 
		\begin{align*}
		N-r_{1,\bm B}\leq Q_{\ell}(r_{1,\bm A},r_{1,\bm B}).
		\end{align*}
		On the other hand, the number of exploited servers for the strongly feasible pair \alert{$(r_{2,\bm A}, r_{2,\bm B})$} is bounded from above by 
		\begin{align*}
		Q_{\ell}(r_{2,\bm A},r_{2,\bm B}) \leq N.
		\end{align*}
		Therefore 
		\begin{align}
		&Q_{\ell}(r_{2,\bm A},r_{2,\bm B})-Q_{\ell}(r_{1,\bm A},r_{1,\bm B})\leq r_{1,\bm B}\nonumber\\\label{ineq3}\Leftrightarrow\qquad & Q_{\ell}(r_{2,\bm A},r_{2,\bm B}) \leq Q_{\ell}(r_{1,\bm A},r_{1,\bm B}) + r_{1,\bm B}. 
		\end{align}
		\alert{Combining inequalities \eqref{ineq2}, \eqref{ineq3} and $r_{1,\bm B}\leq r_{2,\bm B}$, we get  
			\begin{align*}
			\frac{r_{1,\bm B}}{Q_{\ell}(r_{1,\bm A},r_{1,\bm B})}\stackrel{\eqref{ineq2}}{\leq}\frac{r_{1,\bm B}+1}{r_{1,\bm B}+Q_{\ell}(r_{1,\bm A},r_{1,\bm B})}\stackrel{\eqref{ineq3}}{\leq} \frac{r_{1,\bm B}+1}{Q_{\ell}(r_{2,\bm A},r_{2,\bm B})} \leq \frac{r_{2,\bm B}}{Q_{\ell}(r_{2,\bm A},r_{2,\bm B})}.
			\end{align*}} \alert{Hereby, $\frac{r_{1,\bm B}}{Q_{\ell}(r_{1,\bm A},r_{1,\bm B})}$ and $\frac{r_{2,\bm B}}{Q_{\ell}(r_{2,\bm A},r_{2,\bm B})}$ are the corresponding terms by which the rate increases after subjecting the pairs $(r_{1,\bm A}, r_{1,\bm B})$ and $(r_{2,\bm A}, r_{2,\bm B})$ to the mapping of Lemma \ref{claim3}.}
	\end{proof}
	
	\begin{lemma}
		\label{claim5}
		\alert{Consider the optimization problem \eqref{opt} for a constant $N$ and two consecutive values of $\ell$ denoted by $\ell_{1}$ and $\ell_{2}=\ell_{1}-1$, respectively. The optimal variables for these two problems, represented by $\bm{r}_{\ell_1}^{\star}\triangleq(r^{\star}_{1,\bm A},r^{\star}_{1, \bm B})$ if $\ell=\ell_1$ and $\bm{r}_{\ell_2}^{\star}\triangleq(r^{\star}_{2,\bm A},r^{\star}_{2, \bm B})$ if $\ell=\ell_2$, have a specific relation. That is, if $\bm{r}_{\ell_1}^{\star}$ is known, there are just two possibilities for  $\bm{r}_{\ell_2}^{\star}$:
			\begin{enumerate}[label=($\roman*$)]
				\item $\bm{r}_{\ell_2}^{\star}=\big(r^{\star}_{1,\bm A}+1,r^{\star}_{1, \bm B}\big)$,
				\item $\bm{r}_{\ell_2}^{\star}$
				satisfies $r^{\star}_{2,\bm A}\leq r^{\star}_{1,\bm A}$ and $r^{\star}_{2,\bm B}>r^{\star}_{1,\bm B}$.
			\end{enumerate}}
		\end{lemma}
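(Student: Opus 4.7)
The plan is to perform a case analysis on the relation between $r^{\star}_{2,\bm B}$ and $r^{\star}_{1,\bm B}$. Throughout, the central comparison object at level $\ell_{2}$ will be the pair $P_{1}'\triangleq(r^{\star}_{1,\bm A}+1,r^{\star}_{1,\bm B})$, which by Lemma \ref{claim3} is strongly feasible at $\ell_{2}$, consumes the same number of servers $\tilde{Q}\triangleq Q_{\ell_{1}}(\bm{r}^{\star}_{\ell_{1}})=Q_{\ell_{2}}(P_{1}')$, and has rate $R(\bm{r}^{\star}_{\ell_{1}})+r^{\star}_{1,\bm B}/\tilde{Q}$. Because $\bm{r}^{\star}_{\ell_{2}}$ is itself strongly feasible by Lemma \ref{feasibility}, the three cases $r^{\star}_{2,\bm B}\in\{<,=,>\}\,r^{\star}_{1,\bm B}$ exhaust all possibilities.

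For $r^{\star}_{2,\bm B}=r^{\star}_{1,\bm B}$, I will fix $r_{\bm B}$ and solve the constraint $Q_{\ell}(r_{\bm A},r_{\bm B})\leq N$ for $r_{\bm A}$, giving $r_{\bm A}\leq(N+1)/(r_{\bm B}+1)-\ell$. Strong feasibility at both levels then pins down $r^{\star}_{j,\bm A}=\lfloor(N+1)/(r^{\star}_{1,\bm B}+1)\rfloor-\ell_{j}$ for $j\in\{1,2\}$; subtraction yields $r^{\star}_{2,\bm A}-r^{\star}_{1,\bm A}=\ell_{1}-\ell_{2}=1$, which is exactly possibility (i).

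For $r^{\star}_{2,\bm B}>r^{\star}_{1,\bm B}$, I will prove possibility (ii), i.e.\ $r^{\star}_{2,\bm A}\leq r^{\star}_{1,\bm A}$, by contradiction. If instead $r^{\star}_{2,\bm A}>r^{\star}_{1,\bm A}$, so $r^{\star}_{1,\bm A}\leq r^{\star}_{2,\bm A}-1$, the monotonicity of $Q$ in $r_{\bm A}$ together with the $Q$-invariance of Lemma \ref{claim3} yields
\begin{align*}
Q_{\ell_{1}}(r^{\star}_{1,\bm A},r^{\star}_{2,\bm B})\leq Q_{\ell_{1}}(r^{\star}_{2,\bm A}-1,r^{\star}_{2,\bm B})=Q_{\ell_{2}}(\bm{r}^{\star}_{\ell_{2}})\leq N.
\end{align*}
Hence $(r^{\star}_{1,\bm A},r^{\star}_{2,\bm B})$ is feasible at $\ell_{1}$ with $r_{\bm B}$ strictly larger than $r^{\star}_{1,\bm B}$, contradicting part (ii) of the strong feasibility of $\bm{r}^{\star}_{\ell_{1}}$. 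Thus $r^{\star}_{2,\bm A}\leq r^{\star}_{1,\bm A}$.

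The main obstacle is the remaining case $r^{\star}_{2,\bm B}<r^{\star}_{1,\bm B}$, which is not excluded by strong feasibility alone; the plan there is to show $P_{1}'$ is also an optimizer, so the optimum at $\ell_{2}$ can still be chosen of form (i). Set $\tilde{P}\triangleq(r^{\star}_{2,\bm A}-1,r^{\star}_{2,\bm B})$; the boundary $r^{\star}_{2,\bm A}=1$ forces $r^{\star}_{2,\bm B}=1$ via Lemma \ref{claim1} together with $r^{\star}_{1,\bm B}\geq 2$, a configuration I will rule out by a direct rate comparison with $P_{1}'$. For $r^{\star}_{2,\bm A}\geq 2$, Lemma \ref{claim3} read in reverse makes $\tilde{P}$ strongly feasible at $\ell_{1}$ with $Q_{\ell_{1}}(\tilde{P})=Q_{\ell_{2}}(\bm{r}^{\star}_{\ell_{2}})$. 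Since $r^{\star}_{2,\bm B}<r^{\star}_{1,\bm B}$ and, by Lemma \ref{claim1}, $(r^{\star}_{2,\bm A}-1)+1=r^{\star}_{2,\bm A}\geq r^{\star}_{2,\bm B}$, the inequality at the heart of Lemma \ref{claim4}'s proof applies with $\tilde{P}$ as the smaller-$r_{\bm B}$ pair and $\bm{r}^{\star}_{\ell_{1}}$ as the larger, giving $r^{\star}_{2,\bm B}/Q_{\ell_{1}}(\tilde{P})\leq r^{\star}_{1,\bm B}/Q_{\ell_{1}}(\bm{r}^{\star}_{\ell_{1}})$. Combining this with $R(\tilde{P})\leq R(\bm{r}^{\star}_{\ell_{1}})$ from optimality of $\bm{r}^{\star}_{\ell_{1}}$ and the telescoping identity in Lemma \ref{claim3}(iii), the chain $R(\bm{r}^{\star}_{\ell_{2}})=R(\tilde{P})+r^{\star}_{2,\bm B}/Q_{\ell_{1}}(\tilde{P})\leq R(\bm{r}^{\star}_{\ell_{1}})+r^{\star}_{1,\bm B}/Q_{\ell_{1}}(\bm{r}^{\star}_{\ell_{1}})=R(P_{1}')$ follows, and optimality of $\bm{r}^{\star}_{\ell_{2}}$ forces equality, so $P_{1}'$ is an equally good optimizer realizing (i). The subtle points I expect to check carefully are the precondition $r_{1,\bm A}+1\geq r_{1,\bm B}$ of Lemma \ref{claim4} in the boundary regime of small $r_{\bm A}$, and the isolated corner $r^{\star}_{2,\bm A}=1$.
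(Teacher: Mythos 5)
Your proposal is correct and follows essentially the same route as the paper: the same three-way case split on $r^{\star}_{2,\bm B}$ versus $r^{\star}_{1,\bm B}$, strong feasibility (Lemmas \ref{feasibility} and \ref{claim3}) forcing $\bm{r}^{\star}_{\ell_2}=(r^{\star}_{1,\bm A}+1,r^{\star}_{1,\bm B})$ in the equal case and $r^{\star}_{2,\bm A}\leq r^{\star}_{1,\bm A}$ in the larger-$r_{\bm B}$ case, and the auxiliary pair $(r^{\star}_{2,\bm A}-1,r^{\star}_{2,\bm B})$ combined with Lemma \ref{claim4} to dispose of the smaller-$r_{\bm B}$ case. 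Your deviations are only cosmetic refinements: the explicit floor formula in the equal case, ruling out $r^{\star}_{2,\bm A}>r^{\star}_{1,\bm A}$ via a larger-$r_{\bm B}$ feasible pair instead of the paper's lower bound on $Q^{\star}_{\ell_2}$, and concluding in the smaller-$r_{\bm B}$ case that the optimum can be chosen of form (i) — which, since all the inequalities involved are non-strict, is in fact a slightly more careful reading than the paper's claimed contradiction, and your explicit checks of Lemma \ref{claim4}'s precondition and the corner $r^{\star}_{2,\bm A}=1$ cover points the paper glosses over.
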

		\begin{proof} In the following, we go through three possibilities in the choice of $r^{\star}_{2, \bm B}$ in comparison to $r^{\star}_{1, \bm B}$: $(i) \: r^{\star}_{2, \bm B}=r^{\star}_{1, \bm B}, (ii)\: r^{\star}_{2, \bm B}<r^{\star}_{1, \bm B}$ and $(iii)\: r^{\star}_{2, \bm B}>r^{\star}_{1, \bm B}$.
			\begin{enumerate}[label=($\roman*$)]
				\item \alert{If $r^{\star}_{2, \bm B}=r^{\star}_{1, \bm B}$, choose $r^{\star}_{2, \bm A}=r^{\star}_{1, \bm A}+a,a\in \mathbb{Z}$. Then, the number of exploited servers becomes:
					\begin{align}
					Q_{\ell_2}^{\star}(\bm{r}_{\ell_2}^{\star})&=\big(r^{\star}_{2, \bm A}+\ell_2 \big)\big(r^{\star}_{2, \bm B}+1\big)-1\notag \\\notag
					&=\big(r^{\star}_{1, \bm A}+a+\ell_1 -1\big)\big(r^{\star}_{1, \bm B}+1\big)-1\\ 
					\label{eqNew}
					& =\big(r^{\star}_{1, \bm A}+a'+\ell_1\big)\big(r^{\star}_{1, \bm B}+1\big)-1\leq N,
					\end{align} where $a'=a-1,a'\in\mathbb{Z}$.   
					Since the pair $\bm{r}_{\ell_1}^{\star}$ is optimal, according to Lemma \ref{feasibility}, it must be strongly feasible. Due to the strong feasibility of the pairs $\bm{r}_{\ell_1}^{\star}$ and $\bm{r}_{\ell_2}^{\star}$, we have $a'=0$.} \alert{The other possibilities $a'\in \mathbb{Z}^{+}$ or $a'\in \mathbb{Z}^{-}$ are sub-optimal. First, if \eqref{eqNew} is satisfied for $a'\in \mathbb{Z}^{+}$ contradicts with the strong feasibility assumption of the pair $\bm{r}_{\ell_1}^{\star}$.
					Second, if $a'\in \mathbb{Z^{-}}$, there exists another strongly feasible pair $(r^{\star}_{2,\bm A}-a',r^{\star}_{2, \bm B})$ for $\ell=\ell_1$. Then, this strongly feasible pair fulfills the inequality 
					\begin{align*}
					Q_{\ell_1}^{\star}(\bm{r}_{\ell_1}^{\star})&=\big(r^{\star}_{1, \bm A}+\ell_1 \big)\big(r^{\star}_{1, \bm B}+1\big)-1\\
					&=\big(r^{\star}_{2, \bm A}-a+\ell_2 +1 \big)\big(r^{\star}_{2, \bm B}+1\big)-1\\
					&=\big(r^{\star}_{2, \bm A}-a'+\ell_2\big)\big(r^{\star}_{2, \bm B}+1\big)-1\leq N,
					\end{align*} where $a'=a-1$ and $a'\in \mathbb{Z^{-}}$.
					However, this is in conflict with the pair $\bm{r}_{\ell_2}^{\star}$ being strongly feasible. In summary, this establishes possibility $(i)$ of Lemma \ref{claim5}.}
				
				\item \alert{If $r^{\star}_{2, \bm B}<r^{\star}_{1, \bm B}$, we choose $r^{\star}_{2,\bm A}$ so that the pair $\tilde{\bm{r}}_{\ell_1}\triangleq(r^{\star}_{2,\bm A}-1,r^{\star}_{2, \bm B})$ is strongly feasible for $\ell_1$. Recall from Lemma \ref{claim3} that the pair $\bm{r}_{\ell_2}^{\star}$ is also strongly feasible if $\ell_2=\ell_1-1$ and $N$ being constant. Simultaneously, the pair $\bm{r}^{\star}_{\ell_1}$ maximizes the rate $R_{N,\ell_1}$. Thus, we have
					\begin{align}
					\label{ineqRate}
					R_{N,\ell_1}(\tilde{\bm{r}}^{\star}_{\ell_1}) \leq R_{N,\ell_1}(\bm{r}^{\star}_{\ell_1}).
					\end{align} We denote the rate increase from the rate pair $\tilde{\bm{r}}_{\ell_1}$ to $\bm{r}^{\star}_{\ell_2}$ when $\ell$ decreases by one $(\ell_1\leftarrow\ell_2)$ by $\Delta_{N,\ell_1\leftarrow\ell_2}(\tilde{\bm{r}}_{\ell_1}\leftarrow\bm{r}^{\star}_{\ell_2})$. Similarly, $\Delta_{N,\ell_1\leftarrow\ell_2}(\bm{r}^{\star}_{\ell_1}\leftarrow\tilde{\bm{r}}_{\ell_2})$ refers to the rate increase from $\bm{r}^{\star}_{\ell_1}$ to $\tilde{\bm{r}}_{\ell_2}\triangleq(r^{\star}_{1,\bm A}+1,r^{\star}_{1, \bm B})$. Thus, the overall achievable rates at pairs $\bm{r}^{\star}_{\ell_2}$ and $\tilde{\bm{r}}_{\ell_2}$ correspond to
					\begin{align}
					\label{eqRate}
					\begin{split}
					R_{N,\ell_2}(\bm{r}^{\star}_{\ell_2})=R_{N,\ell_1}(\tilde{\bm{r}}_{\ell_1})+\Delta_{N,\ell_1\leftarrow\ell_2}(\tilde{\bm{r}}_{\ell_1}\leftarrow\bm{r}^{\star}_{\ell_2})\\
					R_{N,\ell_2}(\tilde{\bm{r}}_{\ell_2})=R_{N,\ell_1}(\bm{r}^{\star}_{\ell_1})+\Delta_{N,\ell_1\leftarrow\ell_2}(\bm{r}^{\star}_{\ell_1}\leftarrow\tilde{\bm{r}}_{\ell_2})
					\end{split}.
					\end{align} Due to Lemma \ref{claim4}, we have $\Delta_{N,\ell_1\leftarrow\ell_2}(\bm{r}^{\star}_{\ell_1}\leftarrow\tilde{\bm{r}}_{\ell_2})\geq\Delta_{N,\ell_1\leftarrow\ell_2}(\tilde{\bm{r}}_{\ell_1}\leftarrow\bm{r}^{\star}_{\ell_2})$. Consequently, we infer from \eqref{ineqRate} and \eqref{eqRate} that $R_{N,\ell_2}(\tilde{\bm{r}}_{\ell_2})\geq R_{N,\ell_2}(\bm{r}^{\star}_{\ell_2})$. However, this violates the assumption of optimality at $\bm{r}^{\star}_{\ell_2}$. Thus, $\bm{r}^{\star}_{\ell_2}$ cannot be a maximizing pair if $r^{\star}_{2, \bm B}< r^{\star}_{1, \bm B}$.} 
				\item \alert{If $r^{\star}_{2, \bm B}>r^{\star}_{1, \bm B}$, the number  of exploited servers $Q_{\ell_2}^{\star}$ for the pair $\bm{r}^{\star}_{\ell_2}$ can be lower bounded according to
					\begin{align}
					\begin{split}
					\label{q2}
					Q^{\star}_{\ell_2}(\bm{r}^{\star}_{\ell_2})&=\big(r^{\star}_{2, \bm A}+\ell_2\big)\big(r^{\star}_{2, \bm B}+1\big)-1\\
					&=\big(r^{\star}_{2, \bm A}-1+\ell_{1}\big)\big(r^{\star}_{2, \bm B}+1\big)-1\\
					&\geq \big(r^{\star}_{2, \bm A}-1+\ell_1\big)\big(r^{\star}_{1, \bm B}+2\big)-1.
					\end{split}
					\end{align} Since $Q^{\star}_{\ell_2}(\bm{r}^{\star}_{\ell_2})\leq N$, we infer that
					\begin{align}
					\label{eq25}
					\big(r^{\star}_{2, \bm A}-1+\ell_1\big)\big(r^{\star}_{1, \bm B}+2\big)-1\leq N. 
					\end{align} Since the pair $\bm{r}^{\star}_{\ell_1}$ is strongly feasible, we deduce from \eqref{eq25} that 
					\begin{align*}
					r^{\star}_{2,\bm A}-1\leq r^{\star}_{1,\bm A}-1 \Leftrightarrow r^{\star}_{2,\bm A}\leq r^{\star}_{1,\bm A}.
					\end{align*} Thus, possibility \textcolor{black}{$(iii)$} of Lemma \ref{claim5} is shown.}  
			\end{enumerate}

		\end{proof}
		
		\alert{Beginning from $\ell_{\max}=\lfloor \frac{N-1}{2}\rfloor$, where $\bm{r}_{\ell_{\max}}^{\star}=(1,1)$ (cf. Lemma \ref{claim2}), we seek to determine the optimal $\bm{r}^{\star}_{\ell-1}\triangleq(r^{\star}_{{\ell-1},\bm A},r^{\star}_{{\ell-1},\bm B})$ from $\bm{r}^{\star}_{\ell}\triangleq(r^{\star}_{{\ell},\bm A},r^{\star}_{{\ell},\bm B})$. To this end, we exploit Lemma \ref{claim5}, which states that if $\ell$ decreases by $1$, either $(i)$ $r^{\star}_{{\ell},\bm B}=r^{\star}_{{\ell-1},\bm B}$ or $(ii)$ $r^{\star}_{{\ell},\bm B}>r^{\star}_{{\ell-1},\bm B}$. Specifically, beginning from $\ell_{\max}$, we iteratively move backwards towards $\ell_{\min}=1$ and \emph{estimate} the values of $\ell$ at which $r^{\star}_{\ell,\bm B}$ increases in comparison to previous iterates. This helps us to determine close-to-optimal estimates $\hat{r}_{\ell,\bm B}$ for a given $\ell$ at constant $N$. Using these estimates $\hat{r}_{\ell,\bm B}$, one can solve for $\hat{r}_{\ell,\bm A}$ using the inequality of the optimization problem \eqref{opt}.} 
		
		Let us start with the process of finding $\ell$, where $r^{\star}_{\ell,\bm B}$ must increase compared to $r^{\star}_{\ell-1,\bm B}$. In other words, we restrict ourself to track at which values of $\ell\in[\ell_{\min}:\ell_{\max}]$, $r^{\star}_{\ell,\bm B}$ changes. For notational simplicity, we rewrite $r^{\star}_{\ell_{m},\bm B}$ by $m$ and denote its corresponding pair by $r^{\star}_{\ell_{m},\bm A}$. Here, $\ell_{m}$ denotes the first iterate\footnote{Recall that we move backwards from $\ell_{\max}$ to $\ell_{\min}$.}, or largest $\ell$, for which the optimal $r^{\star}_{\bm B}$ changes from $m'$ to $m$, or mathematically, 
		\begin{align*}
		\ell_{m}=\max\{\ell\in[\ell_{\min}:\ell_{\max}]\:|\:r^{\star}_{\ell,\bm B}=m\}.
		\end{align*} According to $(ii)$ of Lemma \ref{claim5}, the difference of $r_{\bm B}$-values at neighboring $\ell$-values -- $\ell_{m}$ and $\ell_{m}-1$ -- i.e.,  
		\begin{align*}
		r^{\star}_{\ell_{m},\bm B}-r^{\star}_{\ell_{m}-1,\bm B}=m-m'
		\end{align*} is lower-bounded by $1$. Recall that this difference does not have to be necessarily one. However, if we \emph{relax the integer assumption} on $\ell_{m}$ and $\ell_{m-1}$, we can find an $\ell_{m}$ such that $m=m'+1$. This is shown in Fig. \ref{fig:non_integer} (by the step functions in the interval $\ell_m\leq\ell\leq\ell_{m'}$). \alertv{We will see at the end of this appendix that the values of $m$ which are not associated to (positive) integer-valued $\ell_m$ are excluded from the results by using the integer assumption. Furthermore, the non integer values of $\hat{r}_{\ell, \bm B}$ are avoided by applying the ceiling function $\lceil\cdot\rceil$.} We define
		\begin{align*}
		d_{m}=|\ell_{m'}-\ell_{m}|
		\end{align*} as the required number of steps in $\ell$ needed to change $r^{\star}_{\bm B}$ from $m'$ to $m$. In the sequel, we neglect that $\ell$, $r_{\bm A}$ and $d_{m}$ are integer numbers. In the notation this is accounted by using $\hat{\ell}$, $\hat{r}_{\bm A}$ and $\hat{d}_{m}$, respectively. 
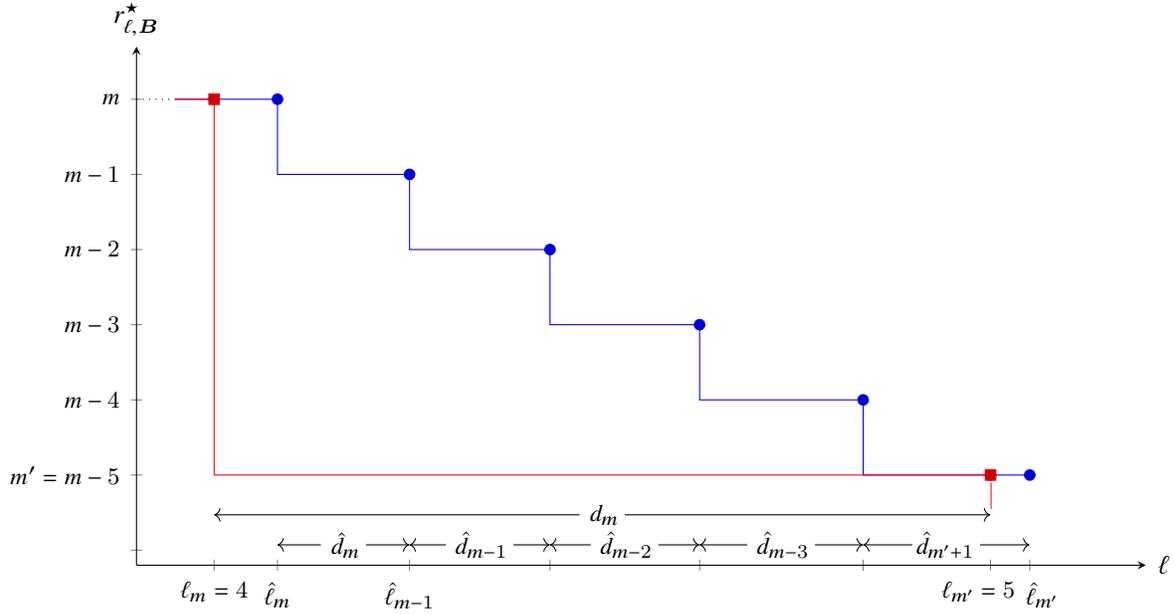
\begin{figure}[t]
	\centering
	\begin{tikzpicture}
	\begin{axis}
	[
	,width=15cm
	,height=10cm
	,xlabel=\small $\ell$
	,ylabel=\small $r^{\star}_{\ell,\bm B}$
	,xmin=3.90, xmax=5.2, ymin=42.8, ymax=49.7,
	,axis x line=middle
	,axis y line=middle
	,enlargelimits=false
	,y=1cm
	,xlabel style={
		anchor=west,
		at={(ticklabel* cs:0.975)},
		xshift=10pt
	}
	,       ylabel style={
		anchor=south,
		at={(ticklabel* cs:0.95)},
		yshift=10pt
	}
	,xtick=data,
	,xticklabels={\footnotesize $\ell_{m}=4$,\footnotesize $\hat{\ell}_{m}$,\footnotesize $\hat{\ell}_{m-1}$,,,,\footnotesize $\ell_{m'}=5\quad$,\footnotesize $\quad\hat{\ell}_{m'}$},
	,yticklabels={,,,\footnotesize $m'=m-5$
		,\footnotesize $m-4$,\footnotesize $m-3$,\footnotesize $m-2$,\footnotesize $m-1$,\footnotesize $m$}
	]
	\addplot+[blue][only marks] coordinates
	{(4,49) (4.0816,49) (4.2517,48) (4.4326,47) (4.6253,46) (4.8359,45) (5,44) (5.0505,44)};
	\addplot+[red][only marks] coordinates
	{(4,49) (5,44)};
	\addplot+[blue][mark=none] coordinates
	{(3.95,49) (4.0816,49) (4.0816,48) (4.2517,48) (4.2517,47) (4.4326,47) (4.4326,46) (4.6253,46)(4.6253,45) (4.8359,45) (4.8359,44) (5.0505,44)
	};
	\addplot+[red][mark=none] coordinates
	{(3.95,49) (4,49) (4,44) (5,44)
	};
	\end{axis}
	
	\draw [<-,yshift=3.7ex] (11.35,0) -- (6.49,0) node[left=-0.4ex] {{\footnotesize $d_{m}$}};
	\draw [->,yshift=3.7ex] (5.89,0) -- (1.03,0);

	\draw [<-,yshift=1.5ex] (3.6275,0) -- (3.0498,0) node[left=-0.4ex] {{\footnotesize $\hat{d}_{m}$}};
	\draw [->,yshift=1.5ex] (2.4498,0) -- (1.8721,0);
	\draw [<-,yshift=1.5ex] (5.4944,0) -- (5.0110,0) node[left=-0.4ex] {{\footnotesize $\hat{d}_{m-1}$}};
	\draw [->,yshift=1.5ex] (4.1109,0) -- (3.6275,0);
	
	\draw [<-,yshift=1.5ex] (7.4831,0) -- (6.9387,0) node[left=-0.4ex] {{\footnotesize $\hat{d}_{m-2}$}};
	\draw [->,yshift=1.5ex] (6.0387,0) -- (5.4944,0);
	
	\draw [<-,yshift=1.5ex] (9.6564,0) -- (9.0197,0) node[left=-0.4ex] {{\footnotesize $\hat{d}_{m-3}$}};
	\draw [->,yshift=1.5ex] (8.1197,0) -- (7.4831,0);
	
	\draw [<-,yshift=1.5ex] (11.8712,0) -- (11.2138,0) node[left=-0.4ex] {{\footnotesize $\hat{d}_{m'+1}$}};
	\draw [->,yshift=1.5ex] (10.3138,0) -- (9.6564,0);
	\draw [dotted] (0,6.2) -- (0.55,6.2);
	
	\draw [red] (11.36,1.1) -- (11.36,0.75);
	
	\end{tikzpicture}
	\caption{\small Illustrative behavior of $r_{\ell,\bm B}$ with respect to $\hat{\ell}_{m}$ for $N=10000$, $\ell_{m}=4$, $m=r^{\star}_{\ell_{m}, \bm B}=49$ and $m'=r^{\star}_{\ell_{m'}, \bm B}=44$. In our approximation we allow for a non-integer relaxation in $\ell_m$ which we denote by $\hat{\ell}_m$. This allows us to associate values $\hat{\ell_j}$ to $r_{\ell,\bm B}\in[m':m]$. According to the figure, $\hat{\ell}_{m}$ is given by $\hat{\ell}_{m}=\hat{\ell}_{m'}-\sum\limits_{i=m'+1}^{m}\hat{d}_i$ or $\hat{\ell}_{m}=\ell_{\max}-\sum\limits_{i=2}^{m}\hat{d}_i$.} 
	\label{fig:non_integer}
\end{figure}

		\textcolor{black}{Due to Lemma \ref{claim3}, the rate of the optimal pair $( r^{\star}_{\ell_{m},\bm A},m)$ is larger than the rate of the sub-optimal pair $(r^{\star}_{\ell_{m}+1,\bm A}+1,m')$.} 
		This translates to the inequality
		\begin{align}
		\label{ineq4}
		\textcolor{black}{\frac{m'\big(r^{\star}_{\ell_{m}+1,\bm A}+1\big)}{Q^{\star}_{\ell_{m}+1}} \leq \frac{mr^{\star}_{\ell_{m},\bm A}}{Q^{\star}_{\ell_{m}}}}.
		\end{align} 
		Applying $(i)$ of Lemma \ref{claim5} on the pairs  $(r^{\star}_{\ell_{m'},\bm A},m')$ and $(r^{\star}_{\ell_{m}+1,\bm A},m')$ leads to 
		\begin{align*}
		r^{\star}_{\ell_{m}+1,\bm A}=r^{\star}_{\ell_{m'},\bm A}+d_{m}-1.
		\end{align*}
		We can now reformulate \eqref{ineq4} as
		\begin{align}
		\label{ineq5}
		\textcolor{black}{\frac{m'(r^{\star}_{\ell_{m'},\bm A}+d_{m})}{Q^{\star}_{\ell_{m}+1}} \leq \frac{mr^{\star}_{\ell_{m},\bm A}}{Q^{\star}_{\ell_{m}}}}.
		\end{align} \textcolor{black}{Assuming that the number of exploited servers is almost constant, i.e.,  $Q^{\star}_{\ell_{m}+1}\approx Q^{\star}_{\ell_{m}}$ allows us to ignore the denominators of the fractions in \eqref{ineq5}}. Further, we apply the \emph{integer relaxation} such that $m'=m-1$. These approximations allow us to transform \eqref{ineq5} to an equality given by\footnote{Due to these approximations, we replace the $"\star"$ superscript with $"\:\:\widehat{}\:\:"$.} 
		\begin{align}
		\notag &\left(m-1\right)(\hat{r}_{\ell_{m-1},\bm A}+\hat{d}_{m})= m\hat{r}_{\ell_{m},\bm A}\\
		\label{eq5}\Leftrightarrow \quad& \hat{d}_{m}=\frac{m}{m-1}\hat{r}_{\ell_{m},\bm A}-\hat{r}_{\ell_{m-1},\bm A},
		\end{align} where $m\in\mathbb{Z}^{+}\setminus\{1\}$. Next, we calculate $\hat{Q}_{\ell_{m}}$ for the pair $(\hat{r}_{\ell_{m},\bm A},m)$ as follows
		\begin{align}
		\notag \hat{Q}_{\ell_{m}}&=(\hat{r}_{\ell_{m},\bm A}+\ell_{m})\left(m+1\right)-1\\
		\notag &=\Big(\hat{r}_{\ell_{m},\bm A}+\ell_{\max}-\sum_{i=2}^{m} \hat{d}_{i}\Big)\left(m+1\right)-1\\
		\notag &=\hat{r}_{\ell_{m},\bm A}\left(m+1\right)+\ell_{\max}\left(m+1\right)-\bigg(\sum_{i=2}^{m} \hat{d}_{i}\bigg)\left(m+1\right)-1\\
		\notag &\stackrel{(a)}{\approx}\hat{r}_{\ell_{m},\bm A}\left(m+1\right)+\frac{N}{2}\left(m-1+2\right)-\left(\sum_{i=2}^{m} \hat{d}_{i}\right)\left(m+1\right)-1\\
		\notag &=N+\hat{r}_{\ell_{m},\bm A}\left(m+1\right)+\frac{N}{2}\left(m-1\right)-\left(\sum_{i=2}^{m} \hat{d}_{i}\right)\left(m+1\right)-1\\&\stackrel{(b)}{\textcolor{violet}{=}}N-1\nonumber,
		\end{align} where $(a)$ and $(b)$ are due to the approximations $\ell_{\max}\approx\frac{N}{2}$ and  $\hat{Q}_{\ell_{m}}\approx N-1$, respectively. It is easy to conclude from above \textcolor{black}{equality} that 
		\begin{align}
		\label{eq1} \hat{r}_{\ell_{m},\bm A}\left(m+1\right)+\frac{N}{2}\left(m-1\right)-\left(\sum_{i=2}^{m} \hat{d}_{i}\right)\left(m+1\right)=0.
		\end{align}
		From Eq. \eqref{eq1}, we get
		\begin{align}
		\label{eq2}\sum_{i=2}^{m} \hat{d}_{i}=\frac{(m-1)}{(m+1)}\frac{N}{2}+\hat{r}_{\ell_{m},\bm A}.
		\end{align} Similarly, we can approximate $\hat{Q}_{\ell_{m-1}}$ by the same approach such that
		\begin{align}
		\label{eq3}\sum_{i=2}^{m-1} \hat{d}_{i}=\frac{(m-2)}{m}\frac{N}{2}+\hat{r}_{\ell_{m-1},\bm A}.
		\end{align}
		Subtracting \eqref{eq3} from \eqref{eq2} gives
		\begin{align}
		\label{eq4}
		\begin{split}
		\hat{d}_{m}=&\left(\frac{m-1}{m+1}-\frac{m-2}{m}\right)\frac{N}{2}+\hat{r}_{\ell_{m},\bm A}-\hat{r}_{\ell_{m-1},\bm A}\\
		=&\frac{N}{m\left(m+1\right)}+\hat{r}_{\ell_{m},\bm A}-\hat{r}_{\ell_{m-1},\bm A}.
		\end{split}
		\end{align}
		From the Equations \ref{eq5} and \ref{eq4} $\hat{d}_{m}$ is given by
		\begin{align}
		\label{eq6}
		\begin{split}
		\hat{d_{m}}=&\alert{\frac{1}{m+1}N-\hat{r}_{\ell_{m-1},\bm A}}\\\stackrel{(c)}{=}&\frac{1}{m+1}N-\frac{(m-2)}{(m-1)m}N\\
		=&\frac{2N}{(m-1)m(m+1)}.\end{split}
		\end{align} Note that in $(c)$, we used  $\hat{r}_{\ell_{m},\bm A}=\frac{(m-1)}{m(m+1)}N$.
		Considering Eq. \eqref{eq6} and the approximation $\ell_{\max}\approx\frac{N}{2}$, $\hat{\ell}_{m}$ corresponds to (cf. Fig. \ref{fig:non_integer})
		\begin{align}
		\notag \hat{\ell}_{m}=&\frac{N}{2}-\sum_{i=2}^{m}\hat{d}_{i}\\
		\notag =&\frac{N}{2}-N\sum_{i=2}^{m}\frac{2}{(i-1)i(i+1)}\\
		\notag =&\frac{N}{2}-N\left(\sum_{i=2}^{m}\frac{1}{i-1}\alert{-\frac{2}{i}}+\frac{1}{i+1}\right)\\
		=&\alert{\frac{N}{2}-N\left[\left(\sum_{i=2}^{m}\frac{1}{i-1}-\frac{1}{i}\right)+\left(\sum_{i=2}^{m}-\frac{1}{i}+\frac{1}{i+1}\right)\right]}\nonumber \\
		=&\alert{\frac{N}{2}-N\left[1-\frac{1}{m}-\frac{1}{2}+\frac{1}{m+1}\right]}\nonumber \\
		\label{eq7} =&\frac{N}{m(m+1)}.
		\end{align}
		Eq. \eqref{eq7} represents an estimate on the number of colluding servers $\hat{\ell}_{m}$ at which $\hat{r}_{\ell,\bm B}$ increases to $m$. We use Eq. \eqref{eq7} to determine $m$ as a function of $N$ and $\ell$. Specifically, recall that for any \textcolor{black}{$\ell\in(\hat{\ell}_{m+1},\hat{\ell}_{m}]$, we know that $\hat{r}_{\ell, \bm B}$ is constant and we} need to ensure that
		\begin{align*}
		\ell\geq \hat{\ell}_{m+1},
		\end{align*} or according to \eqref{eq7} equivalently
		\begin{align}
		\notag
		&\ell\geq\frac{N}{(m+2)(m+1)}\\
		\label{eq10}
		\Leftrightarrow\; & m^2 +3m-\frac{N}{\ell}+\alert{2}\geq 0.
		\end{align} Due to Lemma \ref{claim1}, we choose the smallest $m$ that satisfies \eqref{eq10}. This gives us 
		\begin{align}
		\label{r_B}
		 \textcolor{black}{\hat{r}_{\ell,\bm B}=\biggl\lceil-\frac{3}{2}+\sqrt{\frac{1}{4}+\frac{N}{\ell}}\:\biggr\rceil}
		\end{align} for \textcolor{black}{$\ell\in(\hat{\ell}_{m+1},\hat{\ell}_{m}]$ with $\ell\in\mathbb{Z}^{+}$}. \textcolor{black}{By using the ceiling function, the values of $\hat{r}_{\ell,\bm B}$ which do not correspond to integer numbers are removed for integers $\ell$. This concludes the proof of Theorem \ref{th:res}.}

\ifCLASSOPTIONcaptionsoff
  \newpage
\fi

\bibliography{Citations}

\end{document}